\documentclass[submission]{eptcs}

\usepackage{graphicx}
\usepackage{amssymb,amsmath,stmaryrd,amsthm}
\usepackage[mathscr]{euscript}
\usepackage[latin1]{inputenc}
\DeclareGraphicsRule{.tif}{png}{.png}{`convert #1 `dirname #1`/`basename #1 .tif`.png}
\usepackage{hyperref}
\usepackage{color}
\pagestyle{plain}
\sloppy
\title{Intrinsic Simulations between Stochastic Cellular Automata}

\author{Pablo Arrighi\institute{Université de Grenoble (LIG, UMR 5217), France}
  \institute{Université de Lyon (LIP, UMR 5668), France}
	\and 	Nicolas Schabanel\institute{CNRS, Université Paris Diderot (LIAFA, UMR 7089), France}
        \institute{Université de Lyon (IXXI), France}
 	\and 	Guillaume Theyssier\institute{CNRS, Université de Savoie (LAMA, UMR 5127), France}
      }


\newcommand\DELETE[1]{}

\newtheorem{definition}{Definition}
\newtheorem{lemma}{Lemma}
\newtheorem{corollary}{Corollary}
\newtheorem{theorem}{Theorem}
\newtheorem{fact}{Fact}

\newcommand{\PF}{\ensuremath{\operatorname{\mathscr{PF}}}}
\newcommand{\ZZ}{\ensuremath{\mathbb Z}}

\newcommand{\AUTO}[1]{{\ensuremath{\mathcal{#1}}}}

\newcommand\CAA{\AUTO A}
\newcommand\CAB{\AUTO B}

\newcommand\Mes[1]{\ensuremath{\mathcal{M}({#1})}}

\newcommand\Pof[1]{\ensuremath{\mathcal{P}({#1})}}
\newcommand\DET[1]{\ensuremath{\mathcal{D}_{#1}}}
\newcommand\NDET[1]{\ensuremath{\mathcal{N}_{#1}}}
\newcommand\STOC[1]{\ensuremath{\mathcal{S}_{#1}}}
\newcommand\cyl[2]{\ensuremath{[{#1}]_{#2}}}

\newcommand{\SUB}{\sqsubseteq}
\newcommand{\PROJ}{\unlhd}
\newcommand{\MIX}{{\PROJ{}\hskip-5pt\SUB}}
\newcommand{\NSUB}{\overset{N}{\SUB}}
\newcommand{\NPROJ}{\overset{N}{\PROJ}}
\newcommand{\NMIX}{\overset{N}{\MIX}}
\newcommand{\SSUB}{\overset{S}{\SUB}}
\newcommand{\SPROJ}{\overset{S}{\PROJ}}
\newcommand{\SMIX}{\overset{S}{\MIX}}
\newcommand{\DSUB}{\overset{D}{\SUB}}
\newcommand{\DPROJ}{\overset{D}{\PROJ}}
\newcommand{\DMIX}{\overset{D}{\MIX}}
\newcommand\rest[2]{{}_{#1}{#2}}
\newcommand\proj[2]{{}^{#1}{#2}}
\newcommand\mix[3]{{{}^{#1}_{#2}}{#3}}

\newcommand\shift[1]{\mathfrak{\sigma}_{#1}}
\newcommand\grp[2]{{#1}^{\langle#2\rangle}}
\newcommand\bloc[1]{b_{#1}}
\newcommand\debloc[1]{b^{-1}_{#1}}

\newcommand\simu{\preccurlyeq}
\newcommand\ssimui{{\simu_i^S}}
\newcommand\ssimus{{\simu_\pi^S}}
\newcommand\ssimum{{\simu_m^S}}
\newcommand\nsimui{{\simu_i^N}}
\newcommand\nsimus{{\simu_\pi^N}}
\newcommand\nsimum{{\simu_m^N}}
\newcommand\dsimui{{\simu_i^D}}
\newcommand\dsimus{{\simu_\pi^D}}
\newcommand\dsimum{{\simu_m^D}}

\newcommand\somerel\olessthan

\newcommand\PCA{\textsf{PlainPCA}}

 \renewcommand{\leq}{\leqslant}
 \renewcommand{\geq}{\geqslant}
 \renewcommand{\emptyset}{\varnothing}
 
 \newcommand{\anevent}{\operatorname{\mathcal E}}

\begin{document}
\maketitle

\begin{abstract}
The paper proposes a simple formalism for dealing with deterministic, non-deterministic and stochastic cellular automata in a unifying and composable manner. Armed with this formalism, we extend the notion of intrinsic simulation between deterministic cellular automata, to the non-deterministic and stochastic settings. We then provide explicit tools to prove or disprove the existence of such a simulation between two stochastic cellular automata, even though the intrinsic simulation relation is shown to be undecidable in dimension two and higher. The key result behind this is the caracterization of equality of stochastic global maps by the existence of a coupling between the random sources. We then prove that there is a universal non-deterministic cellular automaton, but no universal stochastic cellular automaton. Yet we provide stochastic cellular automata achieving optimal partial universality.
\end{abstract}


\section{Introduction} 

\paragraph{Motivations.} Cellular Automata (CA) are a key tool in simulating natural phenomena. This is because they constitute a privileged mathematical framework in which to cast the simulated phenomena, and they describe a massively parallel architecture in which to implement the simulator. 
Often however, the system that needs to be simulated is a noisy system. More embarrassingly even, it may happen that the system that is used as a simulator is again a noisy system. The latter is uncommon if one thinks of a classical computer as the simulator, but quite common for instance if one thinks of using a reduced model of a system as a simulator for that system.\\
Fortunately when both the simulated system and the simulating system are noisy, it could happen that both effects cancel out, i.e. that the noise of the simulator is made to coincide with that of the simulated. In such a situation a model of noise is used to simulate another, and the simulation may even turn out to be\ldots exact. This paper attempts to give a formal answer to the question: When can it be said that a noisy system is able to exactly simulate another?\\
This precise question has become crucial in the field of quantum simulation. Indeed, there are many quantum phenomena which we need to simulate, and these in general are quite noisy. Moreover, only quantum computers are able simulate them efficiently, but in the current state of experimental physics these are also quite noisy. Could it be that noisy quantum computers may serve to simulate a noisy quantum systems? The same remark applies to Natural Computing in general. Still, the question is challenging enough in the classical setting.

\paragraph{Challenges and results.} The first problem that one comes across is that stochastic CA have only received little attention from the theoretical community. When they have been considered, only probabilistic CA (PCA) consisting in a probabilistic function uniformly applied to a configuration have been studied \cite{Toom,Gacs,Fates,Mairesse,RST,FatesRST06}. However \cite{DBLP:conf/cie/ArrighiFNT11} exhibits several examples (such as the $\textrm{Parity}$ example which we use later) which cannot be realized as PCA, in spite of the fact that they require only local random correlation and hence fit naturally in the CA framework. Moreover \cite{DBLP:conf/cie/ArrighiFNT11} shows that the composition of two PCA is not always a PCA. The lack of composablity of a model is an obstacle for defining intrinsic simulation, because the notion must be defined up to grouping in space and in time. In \cite{DBLP:conf/cie/ArrighiFNT11} a composable model is suggested, but it lacks formalization.\\ 
In this paper we propose a simple formalism to deal with general stochastic CA. The formalism relies on considering a CA $F(c,s)$ fed, besides the current configuration $c$, with a new fresh independent uniform random configuration $s$ at every time step. This allows any kind of local correlations and includes in particular all the examples of \cite{DBLP:conf/cie/ArrighiFNT11}. As in turns out, the definition also captures deterministic and non-deterministic CA (non-deterministic CA are obtained by ignoring the probability distribution over the random configuration). More importantly, this formalism allows us to extend the notions of simulation developed for the deterministic setting \cite{bulking1,bulking2}, to the non-deterministic and stochastic settings. The choice of making explicit the random source in the formalism has turned out to be crucial to tackle the second problem, as it allows a precise analysis of the influence of randomness, in terms of simulation power. \\ 
Indeed the second problem that one comes across is that the question of whether two such stochastic CA are equal in terms of probability distributions is highly non-trivial. In particular, we show that testing if two stochastic CA define the same random map is undecidable in dimension $2$ and higher (Theorem~\ref{thm:deciding}). Still, we provide in section~\ref{sec:coupling} an explicit tool (the coupling of the random sources of two stochastic CA) that allows to prove (or disprove) the equality of their probability distributions. More precisely, we show that the existence of such a coupling is strictly equivalent to the equality of the distribution of the random maps of two stochastic CA (Theorem~\ref{thm:coupling}).\\
The choice of making expicit the random source allows us to show some no-go results. Any stochastic CA may only simulate stochastic CA with a compatible random source (where compatibility is expressed as a simple arithmetic equation, Theorem~\ref{thm:primefactors}). It follows that there is no universal stochastic CA (Corollary~\ref{cor:no:stoc:universal}). Still, we show that there is a universal CA for the non-deterministic dynamics (Theorem~\ref{thm:nondet:univ}), and we are able to provide a universal stochastic CA for every class of compatible random source (Theorem~\ref{thm:stoc:univ}).  

\paragraph{Plan.} Section~\ref{sec:basic} recalls the vital minimum about probability theory. Section~\ref{sec:SCA} states our formalism. Section~\ref{sec:coupling} gives tools to prove (or disprove) equality of stochastic global functions. Section~\ref{sec:simul} extends the notions of  intrinsic simulations to the non-deterministic and stochastic settings.  Section~\ref{sec:univ} provides the no-go results in the stochastic setting, the universality constructions. Section~\ref{sec:open} concludes this article with a list of open questions.

\section{Standard Definitions}
\label{sec:basic}

Even if this article focuses mainly on one-dimensional CA for the sake of simplicity, it extends naturally to higher dimensions.

For any finite set $A$ we consider the symbolic space $A^\ZZ$. For any $c\in A^\ZZ$ and $z\in\ZZ$ we denote by $c_z$ the value of $c$ at point $z$. $A^\ZZ$ is endowed with the Cantor topology (infinite product of the discrete topology on each copy of $A$) which is compact and metric (see \cite{kurkabook} for details). A basis of this topology is given by cylinders which are actually clopen sets: given some finite word $u$ and some position $z$, the cylinder $\cyl{u}{z}$ is the set
${\cyl{u}{z} = \{c\in A^\ZZ: \forall x, 0\leq x<|u|-1, c_{z+x}=u_x\}.}$

We denote by $\Mes{A^\ZZ}$ the set of Borel probability measures on $A^\ZZ$. By Carath\'eodory extension theorem, Borel probability measures are characterized by their value on cylinders. Concretely, a measure is given by a function $\mu$ from cylinders to the real interval $[0,1]$ such that $\mu(A^\ZZ) = 1$ and
\[\forall u\in Q^*, \forall z\in\ZZ, \quad \mu(\cyl{u}{z}) = \sum_{a\in A}\mu(\cyl{ua}{z}) = \sum_{a\in A}\mu(\cyl{au}{z-1})\]

We denote by $\nu_A$ the uniform measure over $A^\ZZ$ (s.t. $\nu_A(\cyl{u}{z})  = \frac{1}{|A|^{|u|}}$). We shall denote it as $\nu$ when the underlying alphabet~$A$ is clear from the context. 

We endow the set $\Mes{A^\ZZ}$ with the compact topology given by the following distance:
${\mathfrak{D}(\mu_1,\mu_2) = \sum_{n\geq 0}2^{-n}\cdot\max_{u\in A^{2n+1}} \bigl|\mu_1(\cyl{u}{-n})-\mu_2(\cyl{u}{-n})\bigr|}$.
See \cite{Pivato09} for a review of works on cellular automata from the measure-theoretic point of view.

\section{Stochastic Cellular Automata}
\label{sec:SCA}

Non-deterministic and stochastic cellular automata are captured by the same syntactical object given in the following definition. They differ only by the way we look at the associated global behavior. Moreover deterministic CA are a particular case of stochastic CA and can also be defined in the same formalism.


\subsection{The Syntactical Object}

\begin{definition}
A \emph{stochastic cellular automaton} $\AUTO A=(Q,R,V,V',f)$ consists in:
\begin{itemize}
\item a finite set of \emph{states} $Q$
\item a finite set $R$ called the \emph{random symbols}
\item two finite subsets of $\mathbb Z$: $V=\{v_1,\ldots, v_r\}$ and $V'=\{v'_1,\ldots, v'_{r'}\}$, called the \emph{neighborhoods}; $r$ and $r'$ are the sizes of the neighborhoods and ${\rho = \max_{v\in V\cup V'}|v|}$ is the \emph{radius} of the neighborhoods.
\item a \emph{local transition function} $f: Q^{r} \times R^{r'} \rightarrow Q$
\end{itemize}
A function $c\in Q^\ZZ$ is called a \emph{configuration}; $c_j$ is called the \emph{state} of the \emph{cell}~$j$ in configuration~$c$. A function $s\in R^\ZZ$ is called a \emph{$R$-configuration}.

In the particular case where $V'=\{0\}$ (i.e., where each cell uses its own random symbol only), we say that \CAA{} is a \emph{plain probabilistic cellular automaton} (\PCA{} for short).
\end{definition}

\begin{definition}[Explicit Global Function]
 To this local description, we associate the \emph{explicit global function} $F:Q^\ZZ\times R^\ZZ\rightarrow Q^\ZZ$ defined for any configuration $c$ and $R$-configuration $s$ by:
${F(c,s)_z = f\bigl((c_{z+v_1},\ldots,c_{z+v_r}),(s_{z+v'_1},\ldots,s_{z+v'_{r'}})\bigr).}$
Given a sequence $\bigl(s^t\bigr)_t$ of $R$-configurations and an initial configuration $c$, we define the associated \emph{space-time diagram} as the bi-infinite matrix ${\bigl(c^t_z\bigr)_{t\geq 0, z\in\ZZ}}$ where $c^t\in Q^\ZZ$ is defined by $c^0=c$ and ${c^{t+1}=F(c^t,s^t)}$. We also define for any $t\geq 1$ the $t^\text{th}$ iterate of the explicit global function $F^t:Q^\ZZ\times \bigl(R^\ZZ\bigr)^t\rightarrow Q^\ZZ$ by $F^0(c)=c$ for all configuration~$c$ and 
\[F^{t+1}(c,s^1,\ldots,s^{t+1}) = F\bigl(F^t(c,s^1,\ldots,s^t),s^{t+1}\bigr)\]
so that ${c^t=F^t(c,s^1,\ldots,s^t)}$.\DELETE{ We denote by $\CAA^t = (Q,R^t,V_+,V'_+,f^t)$ the stochatic CA whose explicit global function is $F^t$.}
\end{definition}

In this paper, we adopt the convention that local functions are denoted by a lowercase letter (typically $f$) and explicit global functions by the corresponding capital letter (typically $F$). Moreover, we will often define CA through their explicit global function since details about neighborhoods often do not matter in this paper.

The explicit global function capture all possible actions of the automaton on configurations. This function allows to derive three kinds of dynamics: deterministic, non-deterministic and stochastic. 




\subsection{Deterministic and Non-Deterministic Dynamics}

\paragraph{Deterministic.}  The \emph{deterministic global function} ${\DET{F}:Q^\ZZ\rightarrow Q^\ZZ}$ of ${\CAA=(Q,R,V,V',f)}$ is  defined by ${\DET{F}(c) = F(c,0^\ZZ)}$ where $0$ is a distinguished element of $R$. \CAA{}  is said to be \emph{deterministic} if its local transition function~$f$ does not depend on its second argument (the random symbols).

\paragraph{Non-Deterministic.} The \emph{non-deterministic global function} ${\NDET{F}:Q^\ZZ\rightarrow \Pof{Q^\ZZ}}$ of~$\CAA$ is  defined for any configuration $c\in Q^\ZZ$ by 
${\NDET{F}(c) = \{F(c,s):s\in R^\ZZ\}}$.

\paragraph{Dynamics.} The deterministic dynamics of \CAA{} is given by the sequence of iterates $(\DET{F}^t)_{t\geq 0}$. Similarly the non-deterministic dynamics of $\CAA$ is given by the iterates $\NDET{F}^t:Q^\ZZ\rightarrow \Pof{Q^\ZZ}$ defined by $\NDET{F}^{0}(c) = \{c\}$ and $\NDET{F}^{t+1}(c) = \bigcup_{c'\in\NDET{F}^t(c)}\NDET{F}(c')$. 


\subsection{Stochastic Dynamics}

 The stochastic point of view consists in taking the $R$-component as a source of randomness. More precisely, the explicit global function $F$ is fed at each time step with a
random uniform and independent $R$-configuration. This defines a stochastic process for which we are then interested in the distribution of states across space and time. By Carath\'eodory extension theorem, this distribution is fully determined by the probabilities of the events of the form  ``starting from $c$, the word $u$ occurs at position $z$ after $t$ steps of the process''.
Formally, for $t=1$, this event is the set:
\[\anevent_{c,\cyl{u}{z}}=\bigl\{s\in R^\ZZ : F(c,s)\in\cyl{u}{z}\bigr\}.\]
In order to evaluate the probability of this event, we use the locality of the explicit global function $F$. The event ``$F(c,s)\in\cyl{u}{z}$'' only depends of the cells of $s$ from position ${a=z-\rho}$ to position ${b=z+\rho+|u|-1}$. Therefore, if ${J = \{v\in R^{b-a}:F(c,[v]_a)\subseteq[u]_z\}}$, then $\anevent_{c,\cyl{u}{z}}=\cup_{v\in J} [v]_a$ and hence $\anevent_{c,\cyl{u}{z}}$ is a measurable set of probability: $\nu_R(\anevent_{c,\cyl{u}{z}})=\sum_{v\in J}\nu_R(\cyl{v}{a})=|J|/|R|^{b-a}$ (recall that $\nu_R$ is the uniform measure over $R^\ZZ$).


More generally to  any CA \CAA{} we associate its \emph{stochastic global function} ${\STOC{F}:Q^\ZZ\rightarrow\Mes{Q^\ZZ}}$ defined for any configuration $c\in Q^*$ by: $\forall u\in Q^\ZZ, \forall z\in\ZZ,$
 \[
\bigl(\STOC{F}(c)\bigr)(\cyl{u}{z}) = \nu_R(\anevent_{c,\cyl{u}{z}}) \text{ = the probability of event $\anevent_{c,\cyl{u}{z}}$.}
\]

\paragraph{Example.} For instance, consider the stochastic function $\textrm{Parity}$ that maps every configuration~$c$ over the alphabet $\{0,1,\#\}$ to a random configuration in which every $\{0,1\}$-word of length~$\ell$ delimited by two consecutive $\#$ in~$c$ is replaced by a random independent uniform word of length~$\ell$ with even parity. This cannot be realized by a \PCA{}. Still, one can realize the stochastic function $\textrm{Parity}$ as a stochastic CA by means of a local transition function~$f$ of the above type, as follows. Given the configuration~$c$ and a uniform random $\{0,1\}$-configuration~$s$, $f(c_{-1},c_0,c_1, s_{-1},s_0)$ is: $\#$ if $c_0=\#$; else $s_0$ if $c_{-1} = \#$; else $s_{-1}$ if $c_1=\#$; and $(s_{-1}+s_0 \mod 2)$ otherwise. One can easily check that this local correlation ensures that every word delimited by two consecutive $\#$ is indeed mapped to a uniform independent random word of even parity.

\paragraph{Dynamics.} As opposed to the deterministic and non-deterministic setting, defining an iterate of this map is a not so trivial task.  There are two approaches: defining directly the measure after $t$ steps or extending the map $\STOC{F}$ to a map from $\Mes{Q^\ZZ}$ to itself. Both rely crucially on the continuity of $F$. In particular, we want to make sure that the definition of the measure after $t$ steps matches $t$ iterations of the one-step map, and hence, is independent of the explicit mechanics of $F$ but depends only on the map $\STOC{F}$ defined by $F$. 

The easiest one to present is the first approach. For any $t\geq 1$, the event $\anevent^t_{c,\cyl{u}{z}}$ that the word $u$ appears at position~$z$ at time~$t$ from configuration~$c$ consists in the set of all $t$-uples of random configurations $(s^1,\ldots,s^t)$ yieldings $u$ at position~$z$ from~$c$, i.e.: 
\[\anevent^t_{c,\cyl{u}{z}}=\bigl\{(s^1,\ldots,s^t)\in\bigl(R^\ZZ\bigr)^t : F^t(c,s^1,\ldots,s^t)\in\cyl{u}{z}\bigr\}\]
As before $\anevent^t_{c,\cyl{u}{z}}$ is a measurable set in $\bigl(R^\ZZ\bigr)^t$ because it is a product of finite unions of cylinders by the locality of $F$. We therefore define $\STOC{F}^t:Q^\ZZ\rightarrow\Mes{Q^\ZZ} $, the iterate of the stochastic global function, by:
\[\bigl(\STOC{F}^t(c)\bigr)(\cyl{u}{z}) = \nu_{R^t}(\anevent^t_{c,\cyl{u}{z}})
\text{ =  the probability of event $\anevent^t_{c,\cyl{u}{z}}$}
\]
where $\nu_{R^t}$ denotes the uniform measure on the product space $\bigl(R^\ZZ\bigr)^t$. For similar reasons as above, $\STOC{F}^t(c)$ is a well-defined probability measure.

The following key technical fact ensures that two automata define the same distribution over time as soon as their one-step distributions match.
\begin{fact}\label{fac:iterates}
  Let $\AUTO{A}$ and $\AUTO{B}$ be two stochastic CA with the same set of states $Q$ (and possibility different random alphabet) and of explicit global functions $F$ and $G$ respectively. If $\STOC{F}=\STOC{G}$ then for all $t\geq 1$ we have $\STOC{F}^t=\STOC{G}^t$
\end{fact}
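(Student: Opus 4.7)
The plan is to proceed by induction on $t$, with the base case $t=1$ being exactly the hypothesis $\STOC{F}=\STOC{G}$. For the inductive step, fix a configuration $c\in Q^\ZZ$ and a cylinder $\cyl{u}{z}$; the goal is to prove
\[\bigl(\STOC{F}^{t+1}(c)\bigr)(\cyl{u}{z}) = \bigl(\STOC{G}^{t+1}(c)\bigr)(\cyl{u}{z}),\]
after which Carath\'eodory's extension theorem yields $\STOC{F}^{t+1}(c)=\STOC{G}^{t+1}(c)$. The key identity to establish is a factorization of this probability into a contribution from the first $t$ steps and a contribution from the last step, so that the inductive hypothesis handles the former and the base hypothesis handles the latter.

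Concretely, I would use the recursion $F^{t+1}(c,s^1,\ldots,s^{t+1}) = F\bigl(F^t(c,s^1,\ldots,s^t),s^{t+1}\bigr)$ together with the independence of $s^{t+1}$ from $(s^1,\ldots,s^t)$ under the product measure. By locality of $F$, the event that $F(c^t,s^{t+1})$ belongs to $\cyl{u}{z}$ depends on $c^t$ only through a finite window of cells around position $z$; picking a radius $\rho$ large enough to bound the radii of both $F$ and $G$, this window has width $|u|+2\rho$. Conditioning on the value $w$ of $c^t$ on that window and applying Fubini on $\nu_{R_\AUTO{A}^t}\otimes\nu_{R_\AUTO{A}}$, one obtains
\[\bigl(\STOC{F}^{t+1}(c)\bigr)(\cyl{u}{z}) = \sum_{w\in Q^{|u|+2\rho}} \bigl(\STOC{F}^t(c)\bigr)(\cyl{w}{z-\rho})\cdot\alpha_F(w),\]
where $\alpha_F(w)$ is the probability that $F(\bar w,s)\in\cyl{u}{z}$ for $s$ uniform on $R_\AUTO{A}^\ZZ$ and $\bar w$ any extension of $w$ to a full configuration; by locality, this quantity is independent of the choice of $\bar w$ and coincides with $\bigl(\STOC{F}(\bar w)\bigr)(\cyl{u}{z})$.

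The same decomposition applies to $G$, yielding an analogous sum with $\alpha_G(w)=\bigl(\STOC{G}(\bar w)\bigr)(\cyl{u}{z})$. The inductive hypothesis $\STOC{F}^t=\STOC{G}^t$ equates the first factors term by term, while the base hypothesis $\STOC{F}=\STOC{G}$ gives $\alpha_F(w)=\alpha_G(w)$ for every window $w$, so the two sums coincide and the induction goes through. The one genuinely delicate step is the factorization itself: one must verify that the event ``$F^t$ produces the window $w$'' is measurable in $\bigl(R_\AUTO{A}^\ZZ\bigr)^t$ (it is a finite union of cylinders, as already observed in the definition of $\anevent^t_{c,\cyl{u}{z}}$) and that the event ``$F$ applied to an extension of $w$ with fresh randomness lands in $\cyl{u}{z}$'' is likewise a finite union of cylinders in $R_\AUTO{A}^\ZZ$; once both sets are expressed in this cylindrical form, independence of the fresh random source $s^{t+1}$ from $(s^1,\ldots,s^t)$ makes the product structure of the joint probability immediate.
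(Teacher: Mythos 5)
Your proposal is correct and follows essentially the same route as the paper: both decompose $\anevent^{t+1}_{c,\cyl{u}{z}}$ as a disjoint union over the finitely many possible windows $w\in Q^{|u|+2\rho}$ of the product of the event that the first $t$ steps produce $w$ and the event that one further step maps a (locality-determined) representative of $w$ into $\cyl{u}{z}$, then factor the probability by independence of the fresh random source. Your explicit handling of the induction, of possibly different radii for $F$ and $G$, and of the independence of $\alpha_F(w)$ from the chosen extension $\bar w$ are all consistent with the paper's argument.
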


\begin{proof}
  Consider a CA of stochastic global function $F$. Consider a word $u$ and a position $z$. Let $\phi:Q^\ZZ\rightarrow \Pof{R^\ZZ}$ be function that associates to a configuration~$c$ the event $\anevent_{c,\cyl uz}$. $\phi(c)$ is entirely determined by the states of the cells from positions $a=z-\rho$ to $b=z+|u|+\rho$ in~$c$ (locality of $F$). Therefore $\phi$ is constant over every cylinder $\cyl{v}{a}$ with $v\in Q^{b-a}$. If we distinguish some $c_v\in\cyl va$ for every $v\in Q^{b-a}$, we obtain by definition of $F^t$ and  continuity of $F$:
\[
 \anevent^{t+1}_{c,\cyl{u}{z}} = \bigcup_{v\in Q^{b-a}} \left(\anevent^t_{c,\cyl{v}{a}}\times \anevent_{c_v,\cyl{u}{z}}\right).
 \]
Then, since sets $\bigl(\anevent^t_{c,\cyl{v}{a}}\bigr)_{v\in Q^{b-a}}$ are pairwise disjoint (because $F$ is deterministic and cylinders $\cyl{v}{a}$ are pairwise disjoint), we have
\begin{align*}
  \bigl(\STOC{F}^{t+1}(c)\bigr)(\cyl{u}{z}) =
  \nu_{R^{t+1}}(\anevent^{t+1}_{c,\cyl{u}{z}}) &=
  \sum_{v\in Q^{b-a}} \nu_{R^{t+1}}\bigl(\anevent^t_{c,\cyl{v}{a}}\times
  \anevent_{c_v,\cyl{u}{z}}\bigr)\\ &=
  \sum_{v\in Q^{b-a}}\nu_{R^t}(\anevent^t_{c,\cyl{v}{a}})\cdot\nu_R(\anevent_{c_v,\cyl{u}{z}})\\
  &= \sum_{v\in Q^{b-a}} \bigl(\STOC{F}^{t}(c)\bigr)(\cyl{v}{a})\cdot\bigl(\STOC{F}(c_v)\bigr)(\cyl{u}{z})
\end{align*}
The value of $S_F^t(c)$ over cylinders  can thus be expressed recursively as a function of a finite number of values $S_F$ over a finite number of cylinders. It follows that if for some pair of CA \CAA\/ and \CAB\/ with explicit global functions $F$ and $G$ we have  $\STOC{F}=\STOC{G}$, then $\STOC{F}^{t}=\STOC{G}^{t}$ for all $t$.
\end{proof}

In our setting one can recover the non-deterministic dynamics from the stochastic dynamics of a given stochastic CA. This heavily relies on the continuity of explicit global functions and compacity of symbolic spaces.

\begin{fact}\label{fac:stocndet}
  Given two CA with same set of states and explicit global functions $F_A$ and $F_B$, if $\STOC{F_A}=\STOC{F_B}$ then $\NDET{F_A}=\NDET{F_B}$.
\end{fact}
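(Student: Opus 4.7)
The plan is to show that the non-deterministic global function can be recovered from the stochastic global function pointwise, by identifying $\NDET{F}(c)$ with the topological support of the measure $\STOC{F}(c)$. Once this characterization is established, the fact follows immediately: if $\STOC{F_A}(c) = \STOC{F_B}(c)$ for every $c$, then their supports coincide, hence $\NDET{F_A}(c) = \NDET{F_B}(c)$.

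So the work is to prove that $\NDET{F}(c) = \operatorname{supp}(\STOC{F}(c))$ for every configuration $c$ and every explicit global function $F$. First I would observe that $\NDET{F}(c) = \{F(c,s):s\in R^\ZZ\}$ is the continuous image of the compact space $R^\ZZ$ under the continuous map $s\mapsto F(c,s)$ (continuity being a direct consequence of locality of $F$). Hence $\NDET{F}(c)$ is compact, and in particular closed in $Q^\ZZ$. Since the support of a Borel measure is by definition closed, it suffices to compare the two closed sets on a basis of cylinders.

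For the inclusion $\NDET{F}(c)\subseteq\operatorname{supp}(\STOC{F}(c))$: let $c' = F(c,s)$ and let $\cyl{u}{z}$ be any cylinder containing $c'$. By locality, the event $\anevent_{c,\cyl{u}{z}}$ depends only on finitely many coordinates of its argument, so it contains the cylinder in $R^\ZZ$ that fixes the values of $s$ on positions $z-\rho,\ldots,z+|u|-1+\rho$. This cylinder has positive $\nu_R$-measure, so $\bigl(\STOC{F}(c)\bigr)(\cyl{u}{z}) > 0$. Thus every basic open neighborhood of $c'$ has positive measure, proving $c'\in\operatorname{supp}(\STOC{F}(c))$. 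For the reverse inclusion, suppose $c'\notin\NDET{F}(c)$. Since $\NDET{F}(c)$ is closed, there is a cylinder $\cyl{u}{z}$ containing $c'$ and disjoint from $\NDET{F}(c)$. Then $F(c,s)\notin\cyl{u}{z}$ for every $s$, i.e.\ $\anevent_{c,\cyl{u}{z}}=\emptyset$, so $\bigl(\STOC{F}(c)\bigr)(\cyl{u}{z})=0$, witnessing that $c'\notin\operatorname{supp}(\STOC{F}(c))$.

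The only step requiring care is the first inclusion, where one must be explicit about the role of locality in guaranteeing a \emph{positive}-measure preimage rather than just a non-empty one; this is precisely where the argument would fail without the continuity/locality of $F$. The rest is routine point-set topology applied to the Cantor space $Q^\ZZ$, and the conclusion $\NDET{F_A}=\NDET{F_B}$ follows by taking supports on both sides of the equality $\STOC{F_A}=\STOC{F_B}$.
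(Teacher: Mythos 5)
Your proposal is correct and is essentially the paper's own argument: the paper likewise establishes that $\NDET{F}(c)\cap\cyl{u}{z}\neq\emptyset$ iff $\bigl(\STOC{F}(c)\bigr)(\cyl{u}{z})>0$ and then uses closedness of $\NDET{F}(c)$ (continuity of $F$, compactness of the space) to conclude that $\NDET{F}(c)$ is determined by $\STOC{F}(c)$; your identification of $\NDET{F}(c)$ with the support of $\STOC{F}(c)$ is just a tidy repackaging of that same reasoning.
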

\begin{proof}
  Given some stochastic CA of explicit global function $F$, some configuration $c$ and some cylinder $\cyl{u}{z}$ we have 
\[\NDET{F}(c)\cap\cyl{u}{z}\neq\emptyset\Leftrightarrow \anevent_{c,\cyl{u}{z}}\neq\emptyset\Leftrightarrow\bigl(\STOC{F}(c)\bigr)(\cyl{u}{z})>0\]
by definition of $\NDET{F}$ and $\STOC{F}$. Since $\NDET{F}(c)$ is a closed set (continuity of $F$) it is determined by the set of cylinders intersecting it (compacity of the space). Hence $\NDET{F}(c)$ is determined by $\STOC{F}(c)$. The lemma follows. 
\end{proof}


\section{Equality of random maps: undecidability and explicit tools}
\label{sec:coupling}

\paragraph{An undecidable task for dimension~$2$ and higher.}
In the classical deterministic case, it is easy to determine whether two CA have the same global function. Equivalently determining whether two stochastic CA, as syntactical objects, have the same explicit global functions $F$ and $G$ is easy. However, given two stochastic CA which have possibly different explicit global function $F$ and $G$, it still happen that $\NDET{F}=\NDET{G}$ or $\STOC{F}=\STOC{G}$, and determining whether this is the case turns out to be a difficult problem. In fact, Theorem~\ref{thm:deciding} states that these two decision problems are at least as difficult as the surjectivity problem of classical CA, which is undecidable in dimension 2 and higher \cite{Kari94}. 

\begin{theorem}\label{thm:deciding}
  Let $\mathcal{P}_N$ (resp. $\mathcal{P}_S$) be the problem of deciding whether two given stochastic CA have the same non-deterministic (resp. stochastic) global function. The surjectivity problem of classical deterministic CA is reducible to both $\mathcal{P}_N$ and $\mathcal{P}_S$.
\end{theorem}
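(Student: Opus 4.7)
The plan is to construct, from any given deterministic CA, two stochastic CA whose non-deterministic (resp.\ stochastic) global functions coincide if and only if the original CA is surjective. Fix a deterministic CA with local rule $f:Q^{2r+1}\to Q$ and global function $F$. I define two stochastic CA $\CAA$ and $\CAB$, both on state set $Q$ with random alphabet $R=Q$. The first, $\CAA$, ignores the configuration entirely and copies the local random symbol: $f_A(c_0,s_0)=s_0$, so that $F_A(c,s)=s$. The second, $\CAB$, also ignores the configuration but feeds the surrounding random symbols through $f$: $f_B(c_0,s_{-r},\ldots,s_r)=f(s_{-r},\ldots,s_r)$, so that $F_B(c,s)=F(s)$. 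Both are legitimate syntactic objects in the formalism of Section~\ref{sec:SCA}, and the reduction is effective.

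For $\mathcal{P}_N$, the key observation is immediate from the definitions: $\NDET{F_A}(c)=\{s:s\in Q^\ZZ\}=Q^\ZZ$ for every $c$, whereas $\NDET{F_B}(c)=\{F(s):s\in Q^\ZZ\}=F(Q^\ZZ)$. Hence $\NDET{F_A}=\NDET{F_B}$ if and only if $F(Q^\ZZ)=Q^\ZZ$, i.e.\ if and only if $F$ is surjective. This establishes the reduction from the surjectivity problem to $\mathcal{P}_N$.

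For $\mathcal{P}_S$, the same two automata suffice: unwinding the definition of $\STOC{\cdot}$ on cylinders, $\STOC{F_A}(c)=\nu_Q$ (the uniform Bernoulli measure) for every $c$, while $\STOC{F_B}(c)=F_*\nu_Q$, the pushforward of $\nu_Q$ by $F$. Therefore $\STOC{F_A}=\STOC{F_B}$ holds if and only if $F$ preserves the uniform Bernoulli measure. I expect this to be the main obstacle: one then needs the classical theorem that a deterministic CA preserves the uniform Bernoulli measure if and only if it is surjective. The easy direction follows because $\nu_Q(F^{-1}(C))=\nu_Q(C)>0$ for every cylinder $C$ forces $F(Q^\ZZ)$ to be dense, and hence equal to $Q^\ZZ$ by compactness. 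The other direction is Hedlund's balancedness theorem in dimension one and its standard generalization to $\ZZ^d$, which I would invoke by citation.

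As a sanity check, the stochastic reduction is compatible with Fact~\ref{fac:stocndet}: if $\STOC{F_A}=\STOC{F_B}$, then $\NDET{F_A}=\NDET{F_B}$, which by the analysis above already forces $F$ to be surjective; the surjective-implies-uniform-preserving direction is exactly what is needed to close the equivalence. Everything else in the argument is unpacking of the definitions of $\NDET{\cdot}$ and $\STOC{\cdot}$ from Section~\ref{sec:SCA}, and the reduction goes through in any dimension, so the undecidability of surjectivity for $d\geq 2$ transfers directly to both $\mathcal{P}_N$ and $\mathcal{P}_S$.
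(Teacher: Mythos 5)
Your proposal is correct and follows essentially the same route as the paper: encode $F$ into a stochastic CA whose explicit global function is $G(c,s)=F(s)$, compare it against the blank-noise automaton $F_A(c,s)=s$, and invoke the classical balance theorem that a deterministic CA is surjective iff it preserves the uniform measure. The only difference is presentational — the paper leaves the second (blank-noise) automaton implicit, whereas you spell it out — so there is nothing further to add.
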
 
\newcommand{\proofthdeciding}{
\begin{proof}
  Consider a classical CA $F:Q^{\ZZ}\rightarrow Q^{\ZZ}$ and define $\mu_F$ as the image by $F$ of the uniform measure $\mu_0$ on $Q^{\ZZ}$:
  \[\mu_F(\cyl{u}{z}) = \nu\bigl(F^{-1}(\cyl{u}{z})\bigr)\]
  It is well-known that $F$ is surjective if and only if ${\mu_F=\nu}$ (this result is true in any dimension, the proof for dimension 1 is in \cite{Hedlund} and follows from \cite{maki} for higher dimensions, but we recommend \cite{Pivato09} for a modern exposition in any dimension).
  
  Now let us define the stochastic CA $\CAA = (Q,Q,V,V',g)$ such that ${G(c,s) = F(s)}$. With this definition, \CAA{} is such that, for all $c$, ${\STOC{G}(c) = \mu_F}$. Hence, ${\STOC{G}(c)}$ is the uniform measure for any $c$ if and only if $F$ is surjective. We have also ${\NDET{G}(c)=Q^\ZZ}$ for all $c$ if and only if $G$ is surjective. The theorem follows since \CAA{} is recursively defined from $F$. 
\end{proof}
}\proofthdeciding

\paragraph{Explicit tools for (dis)proving equality.}
Even if testing the equality of the non-deterministic or stochastic dynamics of two stochastic CA is undecidable for dimension $2$ and higher, Theorem~\ref{thm:coupling} states that equality, when it holds, can always be certified in terms of a \emph{stochastic coupling}. Indeed the stochastic coupling, by matching their two source of randomness, serves as a witness of the equality of the stochastic CA. This provides us with a very useful technique, because the existence of such a coupling is easy to prove or disprove in many concrete examples. Again the result heavily relies on the continuity of the explicit global function~$F$.

Let us first recall the standard notion of coupling.
\begin{definition}
  Let $\mu_1\in\Mes{Q_1^\ZZ}$ and $\mu_2\in\Mes{Q_2^\ZZ}$. A coupling of $\mu_1$ and $\mu_2$ is a measure $\gamma\in \Mes{Q_1^\ZZ\times Q_2^\ZZ}$ such that for any measurable sets $\anevent_1$ and $\anevent_2$, ${\gamma(\anevent_1\times Q_2^\ZZ)=\mu_1(\anevent_1)}$ and ${\gamma( Q_1^\ZZ\times\anevent_2)=\mu_2(\anevent_2)}$.
\end{definition}
Concretely, a coupling couples two measures so that each is recovered when the other is ignored. The motivation in defining a coupling is to bound the two distributions in order to prove that they induce the same kind of behavior: for instance, one can easily couple the two uniform measures over $\{1,2\}$ and $\{1,2,3,4\}$ so that with probability~$1$, both numbers will have the same parity ($\gamma$ gives a probability $1/4$ to each pair $(1,1)$, $(2,2)$, $(1,3)$ and $(2,4)$ and $0$ to the others). This demonstrates that the parity function is identically distributed in both cases.  

Theorem~\ref{thm:coupling} states that the dynamics of two stochastic CA are identical if and only if their is a  coupling of their random configurations so that their stochastic global functions become almost surely identical. This is one of our main results. 
\begin{definition}
  Two stochastic cellular automata, $\CAA_1=(Q,R_1,V_1,V_1',f_1)$ and $\CAA_2=(Q,R_2,V_2,V_2',f_2)$, with the same set of states~$Q$ are \emph{coupled} on configuration $c\in Q^\ZZ$ by a measure $\gamma\in\Mes{R_1^\ZZ\times R_2^\ZZ}$ if 
  \begin{enumerate}
  \item $\gamma$ is a coupling of the uniform measures on $R_1^\ZZ$ and $R_2^\ZZ$;
  \item $\gamma\bigl(\{(s_1,s_2)\in R_1^\ZZ\times R_2^\ZZ : F_1(c,s_1)=F_2(c,s_2)\}\bigr) = 1$, i.e. $F_1$ and $F_2$ produce almost surely the same image when fed with the $\gamma$-coupled random sources.
  \end{enumerate}
  
\end{definition}
Note that the set of pairs $(s_1,s_2)$ defined above is measurable because it is closed ($F_1$ and $F_2$ are continuous).

\begin{theorem}\label{thm:coupling}
  Two stochastic CA with the same set of states have the same stochastic global function if and only if, on each configuration $c$, they are coupled by some measure $\gamma_c$ (which depends on~$c$).
\end{theorem}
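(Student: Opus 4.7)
The plan is to prove the two directions of the equivalence separately on a fixed configuration $c\in Q^\ZZ$ (writing $F_i(s)$ for $F_i(c,s)$). For the easy direction, only the two defining properties of a coupling are needed: for any cylinder $\cyl{u}{z}$, setting $A_i=\{(s_1,s_2):F_i(s_i)\in\cyl{u}{z}\}$ and $\Delta=\{(s_1,s_2):F_1(s_1)=F_2(s_2)\}$, the marginal property gives $\gamma_c(A_i)=\nu_{R_i}(\{s_i:F_i(s_i)\in\cyl{u}{z}\})=\bigl(\STOC{F_i}(c)\bigr)(\cyl{u}{z})$, while $A_1\triangle A_2\subseteq\Delta^c$ has $\gamma_c$-measure zero. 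Hence $\bigl(\STOC{F_1}(c)\bigr)(\cyl{u}{z})=\bigl(\STOC{F_2}(c)\bigr)(\cyl{u}{z})$ on every cylinder, and therefore everywhere.

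The hard direction, namely constructing $\gamma_c$ from the hypothesis $\STOC{F_1}(c)=\STOC{F_2}(c)$, is what will require real work, and the plan is to build finite-window couplings and pass to a weak-$\ast$ limit. For each $N\geq 0$ let $W_N=[-N,N]$ and $W_N^\rho=[-N-\rho,N+\rho]$. By locality, $F_i(s)|_{W_N}$ is a function $\phi_i^N:R_i^{W_N^\rho}\to Q^{W_N}$ of $s|_{W_N^\rho}$ only, and by hypothesis the pushforwards by $\phi_1^N$ and $\phi_2^N$ of the uniform measures on $R_i^{W_N^\rho}$ coincide as a common distribution $\mu^N$ on $Q^{W_N}$. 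On the finite product $R_1^{W_N^\rho}\times R_2^{W_N^\rho}$ I would define the elementary coupling giving mass $\nu_{R_1}(\{r_1\})\,\nu_{R_2}(\{r_2\})/\mu^N(\{\phi_1^N(r_1)\})$ to $(r_1,r_2)$ whenever $\phi_1^N(r_1)=\phi_2^N(r_2)$ and $0$ otherwise; this has uniform marginals and is concentrated on pairs producing the same image on $W_N$. Extending by an independent product of uniform measures outside $W_N^\rho$ yields $\gamma_c^N\in\Mes{R_1^\ZZ\times R_2^\ZZ}$ with uniform marginals and $\gamma_c^N(\Delta_N)=1$, where $\Delta_N=\{(s_1,s_2):F_1(s_1)|_{W_N}=F_2(s_2)|_{W_N}\}$.

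By weak-$\ast$ compactness of $\Mes{R_1^\ZZ\times R_2^\ZZ}$ (the underlying space being compact metric), some subsequence $\gamma_c^{N_k}$ will converge to a limit $\gamma_c$, which inherits uniform marginals by continuity of marginals under weak convergence. Each $\Delta_N$ is clopen, since it constrains only finitely many coordinates of the two random sources, so the Portmanteau theorem applied to this clopen set gives $\gamma_c(\Delta_N)=\lim_k\gamma_c^{N_k}(\Delta_N)=1$ for every $N$; as $\Delta=\bigcap_N\Delta_N$ is a decreasing intersection, $\gamma_c(\Delta)=1$, which is exactly the required property. The main obstacle I anticipate is precisely the lack of mutual consistency of the $\gamma_c^N$ (the restriction of $\gamma_c^{N+1}$ to the smaller finite product is generally not $\gamma_c^N$), which rules out a direct Kolmogorov extension; compactness of $\Mes{R_1^\ZZ\times R_2^\ZZ}$ together with the clopen character of the $\Delta_N$ is what allows us to circumvent this obstacle.
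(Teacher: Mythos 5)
Your proof is correct and follows essentially the same strategy as the paper's: the easy direction via the marginal and almost-sure-equality properties of the coupling, and the hard direction by building finite-window couplings concentrated on pairs of random sources producing the same output on $[-N,N]$ and then extracting a weak-$\ast$ limit by compactness of $\Mes{R_1^\ZZ\times R_2^\ZZ}$, using that the sets $\Delta_N$ are clopen and decrease to $\Delta$. The only (harmless) differences are implementation details: within each fiber over a common image word you use the conditionally independent coupling and extend by independent uniform measures outside the window, whereas the paper uses an interval-matching construction and pads with a distinguished symbol, so your finite-stage marginals are exactly uniform rather than only in the limit.
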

\noindent {\em Outline of the proof.} We fix a configuration $c$. By continuity of the explicit global functions, we construct a sequence of partial couplings $(\gamma_c^n)$ matching the random configurations of finite support of radius $n$. We then extract the coupling $\gamma_c$ from $(\gamma_c^n)$ by compacity of $\Mes{R_A^\ZZ\times R_B^\ZZ}$.

{
\begin{proof}
{\em Details.} First, if $\CAA_1=(Q,R_1,V_1,V_1',f_1)$ and $\CAA_2=(Q,R_2,V_2,V_2',f_2)$ are coupled by $\gamma_c$ on configuration $c\in Q^\ZZ$, consider for any cylinder $\cyl{u}{z}$ the sets
  \begin{align*}
    \anevent_1 &= \{s\in R_1^\ZZ : F_1(c,s)\in\cyl{u}{z}\}\\
    \anevent_2 &= \{s\in R_2^\ZZ : F_2(c,s)\in\cyl{u}{z}\}\\
    X &= \{(s_1,s_2) \in R_1^\ZZ\times R_2^\ZZ : F_1(c,s_1)=F_2(c,s_2)\}
  \end{align*}
  Then, by the property of the coupling by $\gamma_c$,  we have 
\[\bigl(\STOC{F_1}(c)\bigr)(\cyl{u}{z}) = \nu_1(\anevent_1) = \gamma_c(\anevent_1\times R_2^\ZZ)\]
 where $\nu_1$ is the uniform measure on $R_1^\ZZ$. But $\gamma_c(\anevent_1\times R_2^\ZZ) = \gamma_c\bigl((\anevent_1\times R_2^\ZZ) \cap X\bigr)$ since $\gamma_c(X)=1$. Symmetrically we have \[\bigl(\STOC{F_2}(c)\bigr)(\cyl{u}{z}) = \gamma_c\bigl((R_1^\ZZ\times \anevent_2) \cap X\bigr).\]
 But, by definition of sets $\anevent_1$, $\anevent_2$ and $X$, we have ${R_1^\ZZ\times \anevent_2 \cap X = \anevent_1\times R_2^\ZZ \cap X}$. We conclude that $\STOC{F_1}=\STOC{F_2}$.
 For the other direction of the theorem, suppose $\STOC{F_1}=\STOC{F_2}$ and fix some configuration $c$. We denote by $\mu$ the measure ${\STOC{F_1}(c)=\STOC{F_2}(c)}$. Without loss of generality we can suppose that $\CAA_1$ and $\CAA_2$ have same raidii $\rho$: $\rho_1=\rho_2=\rho$. We construct a sequence $(\gamma^n)$ of measures from which we can extract a limit point (by compacity of the space of measures) which is a valid coupling of $\CAA_1$ and $\CAA_2$ on configuration $c$.
\newcommand\ccyl[1]{[{#1}]}
To simplify the proof we focus on centered cylinders: for any word $w$ of odd length, we denote by $\ccyl{w}=\cyl{w}{z_w}$ where ${z_w = -\frac{|w|-1}{2}}$.
Let's fix $n$. For any word $u\in Q^{2n+1}$ we define:
\begin{align*}
  S_u^1 &= \{s\in R_1^\ZZ : F_1(c,s)\in\ccyl{u}\}\\
  S_u^2 &= \{s\in R_2^\ZZ : F_2(c,s)\in\ccyl{u}\}
\end{align*}
$F_1$ and $F_2$ being of radius $\rho$ we can write $S_u^i$ as a finite union of centered cylinders of length ${2(n+\rho)+1}$:
\[S_u^i = \bigcup_{v\in P_u^i}\ccyl{v}\]
where ${P_u^i\subseteq R_i^{2(n+\rho)+1}}$. Define the following partition $I_u^i$ of the real interval $[0,1)$ by:
\newcommand\ord[1]{\mathsf{rank}(#1)}
\[I_u^i(v) = \left[\frac{\ord{v}}{\# P_u^i};\frac{\ord{v}+1}{\# P_u^i}\right[\]
where $\ord{v}\in\{0,\ldots,\# P_u ^i - 1\}$ is the rank of $v$ in some arbitrarily chosen total ordering of $P_u^i$ (the lexicographical order for instance).
Since the sets $P_u^i$ form a partition of ${R_i^{2(n+\rho)+1}}$ when $u$ ranges over all words of $Q^{2n+1}$, we have for any $v\in R_i^{2(n+\rho)+1}$:
\[|I_u^i(v)| = \frac{1}{\# P_u^i} = \frac{\nu_i(\ccyl{v})}{\mu(\ccyl{u})}\]
(recall that $\nu_i$ stands for the uniform measure over $R_i^\ZZ$).
Now, for every ${v^1\in R_1^{2(n+\rho)+1}}$ and ${v^2\in R_2^{2(n+\rho)+1}}$, we construct $\gamma^n$ as:
\[\gamma^n(\ccyl{v^1},\ccyl{v^2}) =
\begin{cases}
  |I_u^1(v^1)\cap I_u^2(v^2)|\cdot\mu_Q(\ccyl{u}) &\text{ if $\exists u$ s.t. ${v^i\in P_u^i}$ for both $i=1,2$}\\
  0 &\text{ otherwise.}
\end{cases}\]
Furthermore, if $0_i$ is a distinguished element of $R_i$, we extend the definition of $\gamma^n$ to any pair $v^1\in R_1^{2m+1}$ and $v^2\in R_2^{2m+1}$ with ${m\geq n+\rho}$ by:
\[\gamma^n(\ccyl{v^1},\ccyl{v^2}) =
\begin{cases}
  \gamma^n(\ccyl{w^1},\ccyl{w^2}) &\text{ if $v^i = 0_i^{m-n-\rho}w^i0_i^{m-n-\rho}$ for $i=1,2$}\\
  0 &\text{ else.}
\end{cases}
\]
By $\sigma$-additivity $\gamma^n$ is thus defined on any cylinder and by extension theorem is a well-defined measure. Now by construction, we have for any $v^1\in R^{2(n+\rho)+1}$:
\[\gamma^n(\ccyl{v^1},R_2^\ZZ) = |I_u^1(v^1)|\cdot\mu_Q(\ccyl{u})\]
for some $u$ such that ${v^1\in P_u^1}$. Hence, ${\gamma^n(\ccyl{v^1},R_2^\ZZ) =\nu_1(\ccyl{v^1})}$. By $\sigma$-additivity of $\gamma^n$ and $\mu$, this equality holds for any $v^1\in R^{2m+1}$ with $m\leq n+\rho$. Symmetrically we have ${\gamma^n(R_1^\ZZ,\ccyl{v^2}) =\mu_{U_2}(\ccyl{v^2})}$ for any $v^1\in R^{2m+1}$.
Moreover, by definition, $\gamma^n(\ccyl{v^1},\ccyl{v^2}) = 0$ if there is no $u$ such that ${v^i\in P_u^i}$ for $i=1,2$. We deduce that the set:
\[X_n = \bigcup_{u\in Q^{2n+1}} S_u^1\times S_u^2\]
has measure $1$. More precisely, since $X_{n+1}\subseteq X_n$, for any ${m\leq n}$, ${\gamma^n(X_m)=1}$.\\
To conclude the proof, let $\gamma$ be any limit point of the sequence $(\gamma^n)_n$. By the definition of the distance on the space of measures, we have:
\begin{enumerate}
\item $\forall m,\forall w\in R_1^{2m+1}, \gamma(\ccyl{w},R_2^\ZZ) = \nu_1(\ccyl{w})$ and symmetrically for the $R_2$ component, hence $\gamma$ is a coupling of uniform measure on $R_1^\ZZ$ and $R_2^\ZZ$;
\item $\forall n,\gamma(X_n)=1$ hence $\gamma(\cap_n X_n) = 1$ where
\[\bigcap_n X_n = X = \{(s_1,s_2) \in R_1^\ZZ\times R_2^\ZZ : F_1(c,s_1)=F_2(c,s_2)\}\]
\end{enumerate}
We deduce that $\CAA_1$ and $\CAA_2$ are coupled on $c$ by measure $\gamma$.
 
\end{proof}
}

Notice that the proof of this theorem is non-constructive (recall that equality of stochastic global maps is undecidable in dimension~$2$ and higher). Moreover, it is easy to get convinced on a simple example that the coupling must depend on the configuration. Consider the two following automata with states $Q=R=\{0,1\}$ and neighborhoods $V=V'=\{0\}$: \CAA~with explicit global function $F(c,s)=s$ and \CAB\/~with explicit global function $G(c',s')=c'+s'\mod 2$. Clearly, both \CAA\/ and \CAB\/ define the same blank noise CA and the coupling proving this fact is defined for all $z\in\ZZ$ and all $a,b\in\{0,1\}$ by $\gamma_c(\cyl az,\cyl bz) = 1/2$ if and only if $a = b+c_z \mod 2$, and $=0$ otherwise. This coupling demonstrates indeed that $\gamma_c\bigl(\{(s,s'):F(c,s) = G(c,s')\}\bigr)=1$ yielding that the dynamics are identical; but note that $\gamma_c$ must depend on~$c$.

\DELETE{\begin{proof}(Sketch)
Given a configuration $c$, the principle is for every $n$ and every word $u$ of length $2n+1$ to build an sequence of increasing finite support couplings $(\gamma_c^n)$ where $\gamma_c^n$ matches the two finite subsets of finite random strings (thanks to the locality of \CAA\/ and \CAB)  that yields the word $u$ centered on $0$ in $\CAA$ and $\CAB$. We then conclude by extracting  from $(\gamma_c^n)$ a subsequence converging to some measure $\gamma_c$ thanks to the compacity of the metric space $\Mes{R_A^\ZZ\times R_B^\ZZ}$. As pointed out earlier, this proof is non-constructive.
\end{proof}}


\section{Intrinsic Simulations}
\label{sec:simul}

The purpose of this section is to give a precise meaning to the sentence ``$\AUTO{A}$ is able to simulate $\AUTO{B}$'' or equivalently ``$\AUTO{A}$ contains  the behavior of $\AUTO{B}$''. 

Our approach follows a series of works on simulations between classical deterministic CA \cite{phdrapaport,phdollinger,phdtheyssier,bulking1,bulking2}.
We are going to define simulation pre-orders on stochastic CA which extend the simulation pre-orders defined over classical deterministic CA in \cite{bulking2}. Precisely, we want the new pre-order to be exactly the classical pre-order when restricted to deterministic CA. For general background and motivation behind this simulation pre-order approach we refer to \cite{bulking1,bulking2}. Intrinsic simulation has also been brought to deterministic quantum CA in \cite{ArrighiNUQCA}.

In each case (the deterministic, the non-deterministic, and the stochastic global functions), we will define simulation as an equality of dynamics up to some \emph{local} transformations.

\subsection{Transformations}

The transformations we consider are natural stochastic extensions of the transformation defined in \cite{bulking1,bulking2} for the  classical deterministic CA. These transformations can be divided into two categories: \emph{trimming operations} which allow to trim unwanted parts off the dynamics, and \emph{rescaling transformations} which augment the set of states and/or the neighborhoods.

\subsubsection{Trimming operations} 

They are based on three ingredients: 1) renaming states; 2) restricting to a stable subset of states; and 3) merging  compatible states. These ingredients are synthetized into two definitions (state renaming is implicit in both definitions).

\begin{definition}
Let $\CAA=(Q,R,V,V',f)$ be a stochastic CA. 
\begin{itemize}
\item if $i:Q'\rightarrow Q$ is an injective function such that $Y=(i(Q'))^\ZZ$ is $F$-stable (\textit{i.e.} $F(Y,R^\ZZ)\subseteq Y$) then the \emph{$i$-restriction} of $\CAA$ is the stochastic CA:
  \[\rest{i}{\CAA}=(Q',R,V,V',\rest{i}{f})\]
  where $\rest{i}{f}$ is the local function associated with the explicit global function $\rest{i}{F}$ such that, $\forall c\in I$, $\forall s\in R^\ZZ$, $\rest{i}{F}(c,s) = I^{-1}\circ F(I(c),s)$ where $I:Q'^\ZZ\rightarrow Q^\ZZ$ denotes the cell-by-cell extension of $i$;
\item if $\pi:Q\rightarrow Q'$ is surjective and $F$-compatible (\textit{s.t.} ${\Pi\circ F(c,s) = \Pi\circ F(c',s)}$ for all $s$ and all $c,c'$ such that $\Pi(c)=\Pi(c')$, where $\Pi:Q^\ZZ\rightarrow Q'^\ZZ$ is the cell-by-cell extension of $\pi$), then the \emph{$\pi$-projection} of $\CAA$ is the stochastic~CA:
  \[\proj{\pi}{\CAA}=(Q',R,V,V',\proj{\pi}{f})\]
  where $\proj{\pi}{f}$ is the local function associated with the explicit global function $\proj{\pi}{F}$ such that $\proj{\pi}{F}(c',s) = \Pi\circ F(c,s)$ where $c$ is any configuration in $\Pi^{-1}(c')$.
\end{itemize}
\end{definition}

If $i:Q'\rightarrow Q$ and $\pi:Q'\rightarrow Q''$ verify the required  stability and compatibility conditions, we denote by $\mix{\pi}{i}{\CAA}$ the $\pi$-projection of the $i$-restriction of \CAA{}.

\begin{definition}\label{def:locrel}
  Let $\CAA_1=(Q_1,R_1,V_1,V_1',f_1)$ and $\CAA_2=(Q_2,R_2,V_2,V_2',f_2)$ be two arbitrary stochastic CA. We define the following relations:
  \begin{itemize}
  \item $\CAA_1\SSUB\CAA_2$, $\CAA_1$ is a \emph{stochastic subautomaton} of $\CAA_2$, if there is some \mbox{$i$-restriction} of $\CAA_2$ such that $\STOC{F_1}=\STOC{\rest{i}{F_2}}$;
  \item $\CAA_1\SPROJ\CAA_2$, $\CAA_1$ is a \emph{stochastic factor} of $\CAA_2$, if there is some $\pi$-projection of $\CAA_2$ such that $\STOC{F_1}=\STOC{\proj{\pi}{F_2}}$;
    \DELETE{ \item $\CAA_1\NSUB\CAA_2$, $\CAA_1$ is a \emph{non-deterministic subautomaton} of $\CAA_2$, if there is some $i$-restriction of $\CAA_2$ such that $\NDET{F_1}=\NDET{\rest{i}{F_2}}$;
    \item $\CAA_1\NPROJ\CAA_2$, $\CAA_1$ is a \emph{non-deterministic factor} of $\CAA_2$, if there is some $\pi$-projection of $\CAA_2$ such that $\NDET{F_1}=\NDET{\proj{\pi}{F_2}}$.
    \end{itemize}
    Moreover, if $\CAA_1$ and $\CAA_2$ are deterministic, we also define
    \begin{itemize}
    \item $\CAA_1\DSUB\CAA_2$, $\CAA_1$ is a \emph{deterministic subautomaton} of $\CAA_2$, if there is some \mbox{$i$-restriction} of $\CAA_2$ such that $\DET{F_1}=\DET{\rest{i}{F_2}}$;
    \item $\CAA_1\DPROJ\CAA_2$, $\CAA_1$ is a \emph{deterministic factor} of $\CAA_2$, if there is some $\pi$-projection of $\CAA_2$ such that $\DET{F_1}=\DET{\proj{\pi}{F_2}}$.}
  \end{itemize}
  Similarly, we define $\NSUB$ and $\NPROJ$ (for non-deterministic global maps) and $\DSUB$ and $\DPROJ$ (for deterministic global maps).
  We also define the three relations $\DMIX$, $\NMIX$ and $\SMIX$ using projections of restrictions. For instance: $\CAA_1\SMIX\CAA_2$ if there are $i$ and $\pi$ such that $\STOC{F_1}=\STOC{\mix{\pi}{i}{F_2}}$.
\end{definition}


\subsubsection{Rescaling transformations.}

The transformations defined so far only allow to derive a finite number of CA from a given CA (up to renaming of the states) and thus induce only a finite number of dynamics. In particular, the size of the set of states, and the size of the neighborhood, can only decrease. Following the approach taken for classical deterministic CA, we now consider \emph{rescaling transformations}, which allow to increase the set of states, the neighborhood, etc\DELETE{ while preserving the dynamics, in fact preserving the complete information in the space-time diagram}. Rescaling transformations consist in: composing with a fixed translation, packing cells into fixed-size blocks, and iterating the rule a fixed number of times. Notice that since stochastic CA are composable, they are stable under rescaling operations, whereas \PCA{} are not.

The \emph{translation} $\shift{k}$ (for $k\in\ZZ$) is the deterministic CA whose deterministic global function verifies: ${\forall c,\forall z, \DET{\shift{k}}(c)_z = c_{z+k}}$.

Given any finite set $S$ and any $m\geq 1$, we define the bijective \emph{packing map} ${\bloc{m}: S^\ZZ\rightarrow
  \bigl(S^m\bigr)^\ZZ}$ by ${\bloc{m}(c)_z = (c_{mz},c_{mz+1},\ldots,c_{mz+m-1})}$
for all $c$ and $z$.

\begin{definition}
  Let $\CAA=(Q,R,V,V',f)$ be any stochastic CA. Let $m,t\geq 1$ and $k\in\ZZ$. The \emph{rescaling} of \CAA{} with parameters $(m,t,k)$ is the stochastic CA $\grp{\CAA}{m,t,k} = \bigl(Q^m, (R^m)^t, V_+, V_+',\grp{f}{m,t,k}\bigr)$ whose explicit global function $\grp{F}{m,t,k}$ is defined by:
  \[\grp{F}{m,t,k}(c,s) = \bloc{m}\circ \shift{k}\circ F^t(\debloc{m}(c),\debloc{m}(s^1),\ldots,\debloc{m}(s^t))\]
  where $s^1,\ldots,s^t\in (R^m)^\ZZ$ are the $t$ components of $s$ (s.t. $s^i_j = (s_j)_i$), and $V_+, V_+'$ the modified neighbourhoods following $b_m$.
\end{definition}

\subsection{Simulation Pre-Orders}

We can now define the general simulation relations.

\begin{definition}
  For each local relation $\somerel$ among the nine relations of Definition~\ref{def:locrel}, we define the associated simulation relation $\simu$ by
  \[\CAA_1\simu\CAA_2 \Leftrightarrow \exists m_1,m_2,t_1,t_2,k_1,k_2, \grp{\CAA_1}{m_1,t_1,k_1}\,\somerel\,\grp{\CAA_2}{m_2,t_2,k_2}\]
We therefore define nine simulation relations $\ssimui$, $\ssimus$, $\ssimum$, $\nsimui$, $\nsimus$, $\nsimum$, $\dsimui$, $\dsimus$ and $\dsimum$, where the subscript denotes the kind of local relation used (\textbf{i}njection, \textbf{$\pmb\pi$\hspace{-.2mm}}rojection or \textbf{m}ixed) and the superscript denotes the kind of global functions which are compared (\textbf{S}tochastic, \textbf{N}on-deterministic or \textbf{D}eterministic).

\DELETE{  To simplify notations and help the reader, we denote by 'i' (like injection) the local relations using restriction, by '$\pi$'  those using projection, and by 'm' (like mixed) those using restriction of projection. Hence, we define the nine following simulation relations:
  \begin{center}
    \small
    \begin{tabular}{r|c|c|c}
      & injection & surjection & mixed\\
      \hline
      &&&\\
      deterministic & $\dsimui$ & $\dsimus$ & $\dsimum$\\
      &&&\\
      non-deterministic & $\nsimui$ & $\nsimus$ & $\nsimum$\\
      &&&\\
      stochastic & $\ssimui$ & $\ssimus$ & $\ssimum$\\
    \end{tabular}
  \end{center}}
\end{definition}

\begin{lemma}\label{lem:restproj}
A restriction (resp. projection) of a restriction (resp. projection) of some stochastic CA \CAA{} is a restriction (resp. projection) of \CAA{}. 
Moreover, any restriction of a projection of \CAA{} is the projection of some restriction of \CAA{}.
\end{lemma}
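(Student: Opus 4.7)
My plan is to handle the three assertions of the lemma in turn, exploiting the fact that all three constructions (restriction, projection, mixed) are specified by how the local function behaves on configurations and that the stability/compatibility conditions compose well.

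For the composition of two restrictions, I would take $i_1: Q'\to Q$ and $i_2: Q''\to Q'$ satisfying the required stability conditions for $\CAA$ and $\rest{i_1}{\CAA}$ respectively, and check that $i_1\circ i_2$ is an injection with $(i_1\circ i_2)(Q'')^\ZZ$ being $F$-stable (this last point follows by chasing inclusions: $(i_1\circ i_2)(Q'')^\ZZ \subseteq (i_1(Q'))^\ZZ$ which is $F$-stable, and transport along $i_1^{-1}$ shows that the smaller set is stable too). The resulting local function coincides with $\rest{i_1\circ i_2}{f}$ by a direct unfolding of definitions. For the composition of two projections $\pi_1:Q\to Q'$ and $\pi_2:Q'\to Q''$, I would similarly argue that $\pi_2\circ\pi_1$ is surjective and $F$-compatible, the latter because if $\Pi_2\circ\Pi_1(c)=\Pi_2\circ\Pi_1(c')$, then by $F$-compatibility of $\pi_1$ applied after a small trick (one must first lift to configurations with equal $\Pi_1$-image, but projecting via $\Pi_2$ absorbs any remaining discrepancy), one obtains $\Pi_2\circ\Pi_1\circ F(c,s)=\Pi_2\circ\Pi_1\circ F(c',s)$.

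The substantive assertion is the third one: a restriction of a projection is a projection of a restriction. Given $\pi:Q\to Q'$ defining the projection $\proj{\pi}{\CAA}$ and $i:Q''\to Q'$ defining a restriction of $\proj{\pi}{\CAA}$, the natural guess is to set $Q'''=\pi^{-1}(i(Q''))$, take $j:Q'''\hookrightarrow Q$ as the inclusion, and define $\sigma:Q'''\to Q''$ by $\sigma(q)=i^{-1}(\pi(q))$ (well-defined since $\pi(q)\in i(Q'')$ on $Q'''$, and $i$ is injective). Then I would verify three things: (a) $(j(Q'''))^\ZZ$ is $F$-stable, by noting that $\Pi$ maps this set into $(i(Q''))^\ZZ$, which is $\proj{\pi}{F}$-stable by assumption, and then pulling back via $\Pi$ using the intertwining $\Pi\circ F=\proj{\pi}{F}\circ\Pi$; (b) $\sigma$ is surjective and $\rest{j}{F}$-compatible, which reduces to the $F$-compatibility of $\pi$ via the identity $I\circ\Sigma=\Pi\circ J$ on $(Q''')^\ZZ$; (c) the two explicit global functions $\proj{\sigma}{\rest{j}{F}}$ and $\rest{i}{\proj{\pi}{F}}$ coincide, which follows from chasing the same commuting square $I^{-1}\circ\proj{\pi}{F}(I(\cdot),s)=\Sigma\circ F(J(\cdot),s)$.

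The main obstacle is purely bookkeeping: one must be careful to show that the pulled-back set $\pi^{-1}(i(Q''))^\ZZ$ is actually $F$-stable rather than merely ``eventually compatible''; this uses both the stability of $(i(Q''))^\ZZ$ under $\proj{\pi}{F}$ and the intertwining relation $\Pi\circ F=\proj{\pi}{F}\circ\Pi$. Once this verification is carried out and the commuting diagram between $I,\Pi,J,\Sigma$ is established, the equality of the two explicit global functions is immediate, and hence equality of their deterministic, non-deterministic, and stochastic global functions follows automatically.
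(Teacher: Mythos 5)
Your proof is correct and follows exactly the route the paper intends: the paper's own ``proof'' is just a one-line deferral to the deterministic case (Theorem~2.1 of the bulking paper), noting that the arguments adapt, and your construction --- in particular taking $Q'''=\pi^{-1}(i(Q''))$ with $\sigma=i^{-1}\circ\pi$ for the restriction-of-projection case, and checking stability, compatibility, and the commuting square with the random component $s$ carried along --- is precisely that adaptation. Nothing is missing; if anything, your writeup is more detailed than what appears in the paper.
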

\begin{proof}
  This is a straightforward generalization of the corresponding result in the classical deterministic settings. A detailed proof for the deterministic case appears in Theorem~2.1 of \cite{bulking2}. All arguments given in the proof are easily adaptable to our setting.
\end{proof}

The lemma above implies that any sequence of admissible restrictions and projections can be expressed as the projection of some restriction. 

From Lemma~\ref{lem:restproj} it follows that all local relations defined are transitive and reflexive. Moreover, the deterministic relations $\DSUB$ and $\DPROJ$ are exactly the same as those defined in the classical setting of deterministic CA \cite{bulking2}.

\begin{fact}\label{fac:simupreorders}
  All simulation relations $\ssimui$, $\ssimus$, $\ssimum$, $\nsimui$, $\nsimus$, $\nsimum$, $\dsimui$, $\dsimus$, $\dsimum$  are pre-orders.
\end{fact}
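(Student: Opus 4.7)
The plan splits into reflexivity and transitivity. Reflexivity of each of the nine simulation relations is immediate: take rescaling parameters $(m,t,k)=(1,1,0)$ on both sides, so that $\grp{\CAA}{1,1,0}$ is $\CAA$ itself (up to trivial identification), and observe that the identity map $Q\to Q$ simultaneously witnesses each of the three local relations $\SUB$, $\PROJ$ and $\MIX$ of Definition~\ref{def:locrel}---it is trivially injective, surjective, stable and compatible, and preserves $\DET{F}$, $\NDET{F}$ and $\STOC{F}$ alike.

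For transitivity, assume $\CAA_1\simu\CAA_2$ is witnessed by $\grp{\CAA_1}{m_1,t_1,k_1}\somerel\grp{\CAA_2}{m_2,t_2,k_2}$ and $\CAA_2\simu\CAA_3$ by $\grp{\CAA_2}{m_2',t_2',k_2'}\somerel\grp{\CAA_3}{m_3,t_3,k_3}$, where $\somerel$ is the local relation associated with $\simu$. The strategy is to further rescale all four rescaled automata so that the two middle rescalings of $\CAA_2$ become identical, and then splice the two local relations together. This rests on three auxiliary facts that I would prove as preliminary lemmas: (a) rescalings compose---up to the canonical flattening of the nested alphabet $(R^m)^t$, one has $\grp{(\grp{\CAA}{m,t,k})}{m',t',k'}=\grp{\CAA}{mm',tt',k'+m'k}$; (b) rescalings preserve every local relation---if $\AUTO X\somerel\AUTO Y$ then $\grp{\AUTO X}{m,t,k}\somerel\grp{\AUTO Y}{m,t,k}$, via the cell-wise extension of the witnessing injection or projection, which inherits stability/compatibility from the one-step case since these properties pass to $F^t$ and to the packing/shift maps; (c) Lemma~\ref{lem:restproj}, which collapses any composition of restrictions and projections into a single restriction, projection, or mixed operation of the appropriate kind.

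Now set $M=\mathrm{lcm}(m_2,m_2')$ and $T=\mathrm{lcm}(t_2,t_2')$. Applying (b) to rescale each witness further by $(M/m_2,T/t_2,\cdot)$ on the one side and $(M/m_2',T/t_2',\cdot)$ on the other, and using (a) to identify both resulting rescalings of $\CAA_2$ with a common $\grp{\CAA_2}{M,T,k^\star}$ (any shift mismatch being absorbed into the outer shift parameters), one obtains a chain of the form $\grp{\CAA_1}{M_1',T_1',k_1'}\somerel\grp{\CAA_2}{M,T,k^\star}\somerel\grp{\CAA_3}{M_3',T_3',k_3'}$, in which the two $\somerel$-links are both restrictions, both projections, or both mixed, depending on the case. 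By (c) this composition collapses into a single $\somerel$-witness between a rescaling of $\CAA_1$ and a rescaling of $\CAA_3$, which is exactly what $\CAA_1\simu\CAA_3$ requires. The argument is uniform over the nine relations since (a), (b) and (c) hold in each case.

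The main obstacle is the bookkeeping in (a): reconciling the shift parameters and flattening the iterated random alphabet $((R^m)^t)^{m'}$ produced by composing two rescalings back into the shape $(R^{mm'})^{tt'}$ of a single one. In the stochastic case one must additionally verify that this flattening is a measure-preserving relabeling of the finite random alphabet, so that it commutes with $\STOC{\cdot}$ and preserves the existence of couplings guaranteed by Theorem~\ref{thm:coupling}; this is routine because the uniform measure is invariant under any cell-by-cell bijection between finite alphabets. Fact (b) follows by unfolding the definitions and fact (c) is Lemma~\ref{lem:restproj}, so no genuinely new idea is needed beyond the deterministic setting of \cite{bulking2}.
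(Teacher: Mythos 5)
Your proof is correct and follows essentially the same route as the paper's: both reduce transitivity to the facts that rescalings compose, that each local relation $\somerel$ is preserved under rescaling, and that $\somerel$ is itself transitive via Lemma~\ref{lem:restproj}. Your explicit $\mathrm{lcm}$ alignment of the two rescalings of $\CAA_2$ is just a concrete way of carrying out what the paper packages as ``rescalings are commutative with respect to $\somerel$'', so the difference is only bookkeeping.
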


\begin{proof}
  It is sufficient to verify that for any local comparison relation $\somerel$:
  \begin{enumerate}
  \item $\somerel$ is compatible with rescalings, \textit{i.e.}
    \[\CAA_1\somerel\CAA_2\Rightarrow \grp{\CAA_1}{m,t,k}\somerel\grp{\CAA_2}{m,t,k}\]
  \item rescalings are commutative with respect to $\somerel$, \textit{i.e.}
    \[\grp{\grp{\CAA_1}{m,t,k}}{m',t',k'} \somerel \grp{\grp{\CAA_1}{m',t',k'}}{m,t,k}\]
  \end{enumerate}
  Both properties are straightforward from the definitions. Then, the transitivity of any simulation relation follows from the transitivity of the corresponding local comparison relation $\somerel$.
\end{proof}

Each stochastic pre-order is a refinement of the corresponding non-deterministic pre-order as shown by the following fact (straightforward corollary of Fact~\ref{fac:stocndet}).

\begin{fact}\label{fac:ifssimuthennsimu}
  If $\CAA_1\ssimui\CAA_2$ then $\CAA_1\nsimui\CAA_2$. The same is true for pre-orders $\ssimus$, $\ssimum$ and the corresponding (non-)deterministic pre-orders.
\end{fact}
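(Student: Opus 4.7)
The plan is to unfold the definitions of $\ssimui$, $\ssimus$, $\ssimum$ and directly appeal to Fact~\ref{fac:stocndet}. The chain of implications boils down to one key step: equality of stochastic global functions implies equality of non-deterministic global functions. Everything else is bookkeeping about rescalings and local relations.

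First, I would handle the case $\ssimui \Rightarrow \nsimui$. Assuming $\CAA_1 \ssimui \CAA_2$, by definition there exist parameters $m_1,m_2,t_1,t_2,k_1,k_2$ and an injection $i$ such that $\STOC{\grp{F_1}{m_1,t_1,k_1}} = \STOC{\rest{i}{\grp{F_2}{m_2,t_2,k_2}}}$. Applying Fact~\ref{fac:stocndet} to the rescaled automata $\grp{\CAA_1}{m_1,t_1,k_1}$ and $\rest{i}{\grp{\CAA_2}{m_2,t_2,k_2}}$ (which share the same state set $Q_1^{m_1}$ by construction of the restriction), one gets $\NDET{\grp{F_1}{m_1,t_1,k_1}} = \NDET{\rest{i}{\grp{F_2}{m_2,t_2,k_2}}}$, that is $\grp{\CAA_1}{m_1,t_1,k_1} \NSUB \grp{\CAA_2}{m_2,t_2,k_2}$, which gives $\CAA_1 \nsimui \CAA_2$.

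The projection case $\ssimus \Rightarrow \nsimus$ is symmetric, replacing $\rest{i}{\cdot}$ by $\proj{\pi}{\cdot}$ and invoking the same fact on the $\pi$-projection; the mixed case $\ssimum \Rightarrow \nsimum$ combines both via Lemma~\ref{lem:restproj}, which tells us that any $\pi$-projection of an $i$-restriction of a rescaling is again of the same syntactic form, so Fact~\ref{fac:stocndet} applies unchanged. In each case the only nontrivial ingredient is Fact~\ref{fac:stocndet}, everything else is a direct rewriting of definitions.

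There is no genuine obstacle here; the only point worth flagging in the write-up is that Fact~\ref{fac:stocndet} must be applied to the rescaled and trimmed automata (not to $\CAA_1,\CAA_2$ themselves), so one should briefly note that rescaling and trimming produce well-defined stochastic CA to which Fact~\ref{fac:stocndet} can be applied (both automata share the same state alphabet in each comparison, which is exactly the hypothesis required by that fact).
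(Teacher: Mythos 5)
Your proof is correct and follows exactly the route the paper intends: the paper states this fact without proof, calling it a ``straightforward corollary of Fact~\ref{fac:stocndet}'', and your write-up simply fills in the routine unfolding of the definitions of the simulation pre-orders before applying that fact to the rescaled and trimmed automata. Nothing to add.
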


Note that for any simulation relation $\simu$, ${\CAA_1\simu\CAA_2}$ means that two global functions are equal where one is obtained by applying \emph{only} space-time-diagram-preserving rescaling transformations to $\CAA_1$ (the simulated CA) and the other is obtained by applying both rescaling transformations and trimming operations to $\CAA_2$ (the simulator).

\subsection{Classifications of Stochastic Cellular Automata}

Simulation pre-orders can be seen as a tool to classify the behaviors of CA  \cite{bulking1,bulking2}. They can be used to formalize in a more precise way the empirical classes defined historically through experimentations. We now give some results on the structure induced on stochastic CA by this classification.

\paragraph{Ideals.} Some classes of stochastic CA may only simulate CA of their own class. This is the case of the deterministic CA and also of the class of the \emph{noisy} CA which are the CA $F$ such that ${\NDET{F}(c)=Q^\ZZ}$ for all~$c$. 

\begin{fact}\label{fact:ideals}
  Let $\simu$ be any non-deterministic or stochastic pre-order. Let $\CAA_1$ and $\CAA_2$ be stochastic CA such that $\CAA_1\simu\CAA_2$. If $\CAA_2$ is deterministic (resp. noisy) then $\CAA_1$ is deterministic (resp. noisy).
\end{fact}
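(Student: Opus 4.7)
The plan is to first reduce to the non-deterministic case: by Fact~\ref{fac:ifssimuthennsimu}, any stochastic simulation implies a non-deterministic one, so it suffices to handle the pre-orders $\nsimui$, $\nsimus$, and $\nsimum$. Fix one such $\simu$ and unfold $\CAA_1\simu\CAA_2$ into the existence of rescaling parameters $(m_1,t_1,k_1)$ and $(m_2,t_2,k_2)$ together with suitable $i$ and/or $\pi$ such that
\[\NDET{\grp{F_1}{m_1,t_1,k_1}} = \NDET{\mix{\pi}{i}{\grp{F_2}{m_2,t_2,k_2}}}.\]

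The heart of the argument is then a routine verification that each of the three elementary operations---rescaling, $i$-restriction, and $\pi$-projection---preserves both properties at stake, namely ``$\NDET{F}(c)$ is a singleton for every $c$'' (determinism) and ``$\NDET{F}(c)=Q^\ZZ$ for every $c$'' (noisiness). For rescaling, the identity $\NDET{\grp{F}{m,t,k}}(c)=\bloc{m}(\shift{k}(\NDET{F}^t(\debloc{m}(c))))$ together with bijectivity of $\bloc{m}$ and $\shift{k}$ reduces the question to $\NDET{F}^t$: singleton-valued functions iterate to singleton-valued functions, and when each $\NDET{F}(c)=Q^\ZZ$ the union defining $\NDET{F}^{t}$ stays $Q^\ZZ$. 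For the $i$-restriction, $\NDET{\rest{i}{F}}(c)=I^{-1}(\NDET{F}(I(c)))$ obviously preserves singletons; for noisiness, the $F$-stability of $I(Q'^\ZZ)$ combined with $\NDET{F}(I(c))\subseteq Q^\ZZ$ forces $I(Q'^\ZZ)=Q^\ZZ$, so $i$ is a bijection and the restriction is trivial. For the $\pi$-projection, $\NDET{\proj{\pi}{F}}(c')=\Pi(\NDET{F}(c))$ for any $c$ with $\Pi(c)=c'$ sends singletons to singletons, while surjectivity of $\pi$ gives $\Pi(Q^\ZZ)=Q'^\ZZ$.

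Combining these three preservation statements, $\mix{\pi}{i}{\grp{F_2}{m_2,t_2,k_2}}$ inherits both properties from $\CAA_2$, and hence so does $\grp{\CAA_1}{m_1,t_1,k_1}$ by the displayed equality. Descending through the rescaling places $\CAA_1$ itself in the same class as $\CAA_2$. The step I expect to require most care is this last descent through the time component $t_1$ of the rescaling, since $\NDET{F_1}^{t_1}$ being uniformly singleton (resp. uniformly $Q_1^\ZZ$) does not literally force the same at $t=1$---randomness at intermediate steps can wash out before time $t_1$. Since both classes are, however, designed as ideals of the simulation pre-order, membership of $\CAA_1$ is witnessed through $\CAA_1\simu\grp{\CAA_1}{1,t_1,0}$ together with transitivity of $\simu$ (Fact~\ref{fac:simupreorders}), which closes the argument.
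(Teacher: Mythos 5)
Your overall strategy coincides with the paper's: reduce to the non-deterministic pre-orders via Fact~\ref{fac:ifssimuthennsimu}, then check that each elementary operation preserves the two properties. Those preservation checks are essentially correct and match the paper's (modulo a small slip in the restriction/noisy case: stability gives $\NDET{F}(I(c))\subseteq I(Q'^\ZZ)$, and it is this inclusion, combined with $\NDET{F}(I(c))=Q^\ZZ$, that forces $i$ to be onto --- your ``$\subseteq Q^\ZZ$'' is vacuous as written). The genuine gap is the final descent step, which you rightly single out as the delicate point but then close with a circular argument. Saying that ``both classes are designed as ideals of the simulation pre-order'' and invoking $\CAA_1\simu\grp{\CAA_1}{1,t_1,0}$ together with transitivity only re-derives a simulation statement; it does not show that $\CAA_1$ itself is deterministic (resp.\ noisy), because the claim that these classes \emph{are} downward-closed under $\simu$ is precisely Fact~\ref{fact:ideals}. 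You cannot use the statement to prove itself.

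Moreover, the difficulty you flag is not merely presentational: the descent genuinely fails for the properties as defined. Take $Q=\{a,b,c,d\}$, $R=\{0,1\}$, $V=V'=\{0\}$, with $f(a,r)$ equal to $b$ or $c$ according to $r$, and $f(q,r)=d$ for $q\neq a$. Then $\NDET{F}(a^\ZZ)=\{b,c\}^\ZZ$ is not a singleton, yet $F^2$ is the constant map onto $d^\ZZ$, so $\grp{\CAA}{1,2,0}$ is deterministic while $\CAA$ is not; and $\CAA\nsimui\grp{\CAA}{1,2,0}$ by taking $t_1=2$, $t_2=1$ and $i=\mathrm{id}$. A dual example ($Q=R=\{0,1\}$, $F(c,s)_z=s_z$ if $c_z=0$ and $0$ otherwise, which is noisy only after two steps) breaks the noisy case. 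So determinism (resp.\ noisiness) of $\NDET{F_1}^{t_1}$ does not descend to $\NDET{F_1}$, and no rearrangement of your outline fixes this. It is worth noting that the paper's own proof shares exactly this hole: it asserts in one line that both properties are ``preserved by rescaling transformations,'' which holds in the forward direction (from $\CAA_2$ to $\grp{\CAA_2}{m_2,t_2,k_2}$) but is also needed, and fails, in the reverse direction on the simulated side. A correct version of the statement must either forbid time-rescaling of the simulated automaton ($t_1=1$), weaken the conclusion to ``some rescaling of $\CAA_1$ is deterministic (resp.\ noisy),'' or redefine the two classes to be closed under taking rescaling roots.
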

\begin{proof}
  By Fact~\ref{fac:ifssimuthennsimu} it is sufficient to prove this for non-deterministic simulations. The property that the explicit global function is deterministic or noisy (\textit{i.e.} surjective on each configuration) is preserved by rescaling transformation. Hence it is sufficient to check that being deterministic or noisy is preserved by restriction and projection. This is straightforward for projection (because a projection is an onto map). Determinism is clearly preserved by restriction. Moreover, a noisy stochastic CA does not admit any non-trivial restriction because no subset of states is stable under iteration. Hence, the restriction of a noisy CA is necessarily itself (up to renaming of states) or the trivial CA with only one state. Both are noisy and the fact follows.
\end{proof}

\paragraph{Simulation of stochastic CA  by a \PCA{}.} Even if some stochastic CA cannot be expressed as a \PCA{} (because of potential local random correlation), each can be simulated by a particular \PCA{}. Each step is simulated by two steps: 1) each cell first copies its random symbol in its state so that 2) its neighbors read in its state its random symbol to complete the transition. \label{sec:ppcasimul}

\begin{theorem}\label{thm:pcasimu}
  For any stochastic CA $\AUTO A=(Q,R,V,V',f_A)$ there is a \PCA{} $\AUTO B$ such that $\CAA\ssimui\AUTO B$.
\end{theorem}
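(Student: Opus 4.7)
I follow the hint in the paragraph preceding the theorem: build a \PCA{} $\AUTO{B}$ whose two consecutive steps simulate one step of $\AUTO{A}$, by first internalising the random symbol of each cell and then applying $f_A$ deterministically.

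First I would take the state set $Q_B = Q \sqcup (Q \times R)$, the random alphabet $R_B = R$, the read neighbourhood $V_B = V \cup V'$, and $V'_B = \{0\}$ (so $\AUTO{B}$ is a \PCA{} by construction). The local rule $f_B$ is defined by two cases. If the cell's state is $q \in Q$, then using its own random symbol $r \in R$ the cell moves to state $(q,r) \in Q \times R$; this is the \emph{copy} phase. If the cell's state is in $Q \times R$ \emph{and} all the neighbours' states (at positions $V \cup V'$) are in $Q \times R$, then the cell extracts the $Q$-components at the $V$-neighbours and the $R$-components at the $V'$-neighbours, applies $f_A$ and goes back to a state in $Q$; this is the \emph{transition} phase, which is deterministic (the own random symbol is ignored). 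On configurations that are not uniformly in phase, $f_B$ can be defined arbitrarily.

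Next I would check the structural facts that make the simulation work. Let $i : Q \hookrightarrow Q_B$ be the inclusion. Starting from any $c \in Q^\ZZ$, one step of $F_B$ (with random $s^1$) produces a configuration entirely in $(Q \times R)^\ZZ$; a second step of $F_B$ (with independent random $s^2$) produces a configuration back in $Q^\ZZ$. Hence $Q^\ZZ$ is $F_B^2$-stable and the $i$-restriction of the rescaling $\grp{\AUTO{B}}{1,2,0}$ is a well-defined stochastic CA on state set $Q$ with random alphabet $R \times R$. By construction, for every $c \in Q^\ZZ$,
\[
\rest{i}{F_B^2}(c,s^1,s^2) = F_A(c,s^1),
\]
independently of $s^2$. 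Taking the stochastic global function and marginalising over $s^2$ gives $\STOC{\rest{i}{F_B^2}}(c) = \STOC{F_A}(c)$ for all $c$.

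Finally, choosing the trivial rescaling $\grp{\AUTO{A}}{1,1,0} = \AUTO{A}$ and the rescaling $\grp{\AUTO{B}}{1,2,0}$ on the other side, the local relation $\AUTO{A} \,\SSUB\, \grp{\AUTO{B}}{1,2,0}$ holds via $i$, so $\AUTO{A} \ssimui \AUTO{B}$ by definition. No real obstacle arises: the only point requiring care is that $\AUTO{B}$'s rule must be \emph{total} on $Q_B$, which is handled by declaring arbitrary behaviour on out-of-phase configurations, since only the $F_B^2$-stable set $Q^\ZZ$ matters for the restriction.
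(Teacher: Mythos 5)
Your proof is correct and follows essentially the same two-phase construction as the paper's own proof: copy each cell's random symbol into its state, then apply $f_A$ deterministically, and restrict the square of the resulting \PCA{} to $Q^\ZZ$ via the identity injection. If anything, you are slightly more explicit than the paper about the neighbourhoods ($V_B=V\cup V'$, $V'_B=\{0\}$) and about the marginalisation over the unused second random layer, both of which are glossed over as ``straightforward to check'' in the original.
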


\begin{proof}
The idea is to simulate one step of \CAA{} by two steps of $\AUTO B$:
\begin{enumerate}
\item generate a random symbol locally and copy it to a component of states;
\item simulate a stochastic transition of \CAA{} reading states only and ignoring random symbols.
\end{enumerate}
Formally, let $\AUTO B=(Q_B,R,V,V',f_B)$ where $Q_B=Q\cup Q\times R$ and $f_B$ is any local function such that the associated explicit global function $F_B$ verifies:
\begin{enumerate}
\item for any ${c\in Q^\ZZ\subseteq Q_B^\ZZ}$ and any $s\in R^\ZZ$, ${\bigl(F_B(c,s)\bigr)_z = (c_z,s_z)}$
\item for any ${c\in (Q\times R)^\ZZ\subseteq Q_B^\ZZ}$ and any $s\in R^\ZZ$, ${F_B(c,s) = F_A(\pi_Q(c),\pi_R(c))}$ where $\pi_Q$ and $\pi_R$ are cell-by-cell projections on $Q$ and $R$ respectively.
\end{enumerate}
It is straightforward to check that ${\CAA\SSUB{\AUTO B}^2}$ with the restriction induced by the identity injection ${i : Q^\ZZ\rightarrow Q^\ZZ\subseteq Q_B^\ZZ}$.  
\end{proof}

Note that the restriction is essential in the above construction since the behavior is not specified (and no correct behavior can be specified) on configurations where states of type $Q$ and states of type $Q\times R$ are mixed. In particular it is \textbf{false} that the stochastic CA is the square of the \PCA{}; it is a restriction of that.

Still, one could think that we might achieve a simpler simulation by taking $Q_B=Q\times R$ and doing the two steps simultaneously so that ${F_B(c,s)}$ would be the cell by cell product of ${F_A(\pi_Q(c),\pi_R(c))}$ and $s$. But this does not work: for such a $\AUTO B$ there is generally  no restriction  nor projection nor combination of both able to reproduce the stochastic global function of $\AUTO A$. Indeed, if some $c$ and $s_1,s_2$ are such that ${F_A(c,s_1)\neq F_A(c,s_2)}$ there is no valid way to define  a corresponding configuration for $c$ in $F_B$ because the $Q$-component of states in $F_B$ depends only on the previous deterministic configuration, not on the random configuration. Then, one might see this impossibility as an argument against our formalism of simulation. Of course, many extensions of our definitions might be considered to allow more simulations between stochastic CA. However, we think that the random component of the simulated CA should never be used to determine which deterministic configuration of the simulator CA corresponds to which deterministic configuration of the simulated CA. Doing so would be like predicting the noise of a system to prepare the state of another system. In particular, we do not see any reasonable formal setting where $F_B$ defined as above would be able to simulate $F_A$. $F_A$ and $F_B$ might look like two syntactical variants of essentially the same object, but, as stochastic dynamical systems, they are very different. For instance, not every configuration can be reached from any configuration in $F_B$ whereas $F_A$  could have this property (\textit{i.e.} be a noisy stochastic CA).

We believe that a better understanding of the relationship between stochastic CA and \PCA{} should go through the following questions: is there a \PCA{} in any equivalence class induced by the pre-order $\ssimui$? is any stochastic CA $\ssimus$-simulated by some \PCA{}?

\section{Universality}
\label{sec:univ}

The quest for universal CA is as old as the model itself. Intrinsic universality has also a long story as reported in \cite{OllingerUnivhistory}. Our formalism of simulation allows to open the quest to stochastic cellular automata. 

Indeed, one of the main by-product of each simulation pre-order defined above is a notion of intrinsic universality. Formally, given some simulation pre-order $\simu$, a stochastic CA $\CAA$ is \emph{$\simu$-universal} if for any stochastic CA $\AUTO B$ we have ${\AUTO B\simu \CAA}$. When considering deterministic pre-orders, we recover the notions of universality already studied in literature for classical deterministic CA \cite{Ollinger6states,OllingerRichard4states,bulking2}.

\subsection{Negative results}

When considering non-deterministic or stochastic global functions, the random symbols are hidden. Still, the choice of the set of random symbols plays an important role in the global functions we can possibly obtain. We denote by $\PF(n)$ the set of the prime factors of~$n$. By extension, for a stochastic CA \CAA{} with set of random symbols $R$, we denote by $\PF(\CAA)$ the set $\PF(|R|)$. We have the following result:

\begin{lemma}\label{lem:primefactors}
  Let $\AUTO A_1=(Q,R_1,V_1,V_1',f_1)$ and $\AUTO A_2=(Q,R_2,V_2,V_2',f_2)$ be two stochastic CA with same set of states. If they are not deterministic and ${\STOC{F_1}=\STOC{F_2}}$ then $\PF({\AUTO A}_1)\cap\PF({\AUTO A}_2)\neq\emptyset$.
\end{lemma}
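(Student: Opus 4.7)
The plan is to exploit the fact that, by the very definition of $\STOC{F_i}$, every probability assigned to a cylinder is a rational of a very specific shape: it is the ratio of the number of favorable random patterns over the total number of random patterns on a finite window. Hence its reduced denominator can only have prime factors from $\PF(\AUTO{A}_i)$. If $\STOC{F_1}=\STOC{F_2}$, then the same rational must admit two such presentations, and as soon as the denominator is nontrivial, its prime factors must belong to \emph{both} $\PF(\AUTO{A}_1)$ and $\PF(\AUTO{A}_2)$.

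More precisely, first I would record the following elementary observation, already essentially present in the discussion preceding Fact~\ref{fac:iterates}: for any configuration $c$ and any cylinder $\cyl{u}{z}$, if $\rho_i$ is the radius of $\AUTO A_i$ and $\ell=|u|+2\rho_i$, then
\[
\bigl(\STOC{F_i}(c)\bigr)(\cyl{u}{z}) \;=\; \frac{|J_i|}{|R_i|^\ell}
\]
for some set $J_i\subseteq R_i^\ell$. Consequently, when written in irreducible form $p/q$, the denominator $q$ divides $|R_i|^\ell$, so every prime factor of $q$ lies in $\PF(\AUTO A_i)$.

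Next, I would produce a cylinder whose probability is strictly between $0$ and $1$ under $\STOC{F_1}(c)$ (and hence under $\STOC{F_2}(c)$, since the two measures are equal). Because $\AUTO{A}_1$ is not deterministic, its local rule $f_1$ genuinely depends on the random component: there exist states $q_1,\ldots,q_{r_1}\in Q$ and random patterns $s,s'\in R_1^{r_1'}$ such that $f_1(q_1,\ldots,q_{r_1},s)\neq f_1(q_1,\ldots,q_{r_1},s')$. Completing the $q_i$ into any configuration $c\in Q^\ZZ$ and looking at the appropriate position $z$, both outcomes at $z$ occur with strictly positive probability under $\STOC{F_1}(c)$, hence neither has probability $1$. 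The singleton cylinder at position $z$ for, say, the first outcome then has probability strictly between $0$ and $1$.

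Finally, combining the two observations gives the conclusion. The probability above, in reduced form $p/q$, has $q>1$; by the first step, every prime factor of $q$ lies in $\PF(\AUTO{A}_1)$, and simultaneously, applying the same argument to $\STOC{F_2}(c)$ (which assigns the very same probability to that cylinder), every prime factor of $q$ also lies in $\PF(\AUTO{A}_2)$. Any prime factor of $q$ therefore witnesses the nonempty intersection. I do not see any real obstacle in this plan; the only place that requires a small care is the existence of a probability strictly inside $(0,1)$, which is secured by the standard extension from a local non-determinism of $f_1$ to a global non-trivial branching of $\STOC{F_1}(c)$.
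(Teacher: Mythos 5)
Your proposal is correct and follows essentially the same route as the paper's proof: exhibit a cylinder whose probability under $\STOC{F_1}(c)=\STOC{F_2}(c)$ lies strictly in $(0,1)$, then observe that its reduced denominator divides a power of $|R_i|$ for each $i$, so its prime factors witness the nonempty intersection. The only cosmetic difference is that you derive the nontrivial branching directly from the local rule's dependence on the random symbols, whereas the paper phrases it via two distinct configurations in $\NDET{F_1}(c)$; both are equivalent here.
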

\begin{proof} If $\CAA_1$ is not deterministic, then there must exist some configuration $c\in Q^\ZZ$ and two configurations $y\neq y'$ such that $\{y,y'\}\subseteq\NDET{F_1}(c)$. So there are two disjoint cylinders $\cyl{u}{z}\cap\cyl{u'}{z} = \emptyset$ with $y\in\cyl{u}{z}$ and $y'\in\cyl{u'}{z}$. Therefore ${0<\bigl(\STOC{F}(c)\bigr)(\cyl{u}{z})<1}$. Besides, by definition of $\STOC{F}$, we have 
\[\bigl(\STOC{F_1}(c)\bigr)(\cyl{u}{z}) = \nu_1(\anevent^1_{c,\cyl{u}{z}}) = \frac{p}{q}<1\]
for some relatively prime numbers $p$ and $q$ (recall that $\nu_1$ is the uniform measure over $R_1^\ZZ$ and that ${\anevent^1_{c,\cyl{u}{z}}=\{s\in R_1^\ZZ : F_1(c,s)\in\cyl{u}{z}\}}$). Moreover ${\PF(q)\subseteq\PF(|R_1|)=\PF(\CAA_1)}$ since $\anevent^1_{c,\cyl{u}{z}}$ is a finite union of cylinders and since the $\nu_1$-measure of any cylinder is a rational of the form $\frac{a}{|R_1|^b}$ for some integers $a,b\geq 1$.
Now, by hypothesis, we have also
\[\bigl(\STOC{F_2}(c)\bigr)(\cyl{u}{z}) = \frac{p}{q}\]
and by a similar argument as above we deduce that $\PF(q)\subseteq\PF(\CAA_2)$. The lemma follows since $\PF(q)\neq\emptyset$ (because $\frac pq<1$).
\end{proof}

From Lemma \ref{lem:primefactors} it follows, surprisingly perhaps, that the random symbols of a stochastic CA limit its simulation power to stochastic CA that have compatible random symbols.
\begin{theorem}\label{thm:primefactors}
  Let $\simu$ be any stochastic simulation pre-order, and $\CAA_1$ and $\CAA_2$ two stochastic CA which are not deterministic. If $\CAA_1\simu\CAA_2$ then ${\PF({\AUTO A}_1)\cap\PF({\AUTO A}_2)\neq\emptyset}$.
\end{theorem}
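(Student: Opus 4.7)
The plan is to reduce Theorem~\ref{thm:primefactors} to Lemma~\ref{lem:primefactors} by tracking the prime factors of the random alphabet through the operations out of which a simulation is built. By definition of any of the stochastic simulation pre-orders $\ssimui$, $\ssimus$ or $\ssimum$, the hypothesis $\CAA_1\simu\CAA_2$ unfolds to the existence of rescaling parameters $m_1,t_1,k_1,m_2,t_2,k_2$ and of a trimming operation $\mathcal{T}$ (an $i$-restriction, a $\pi$-projection, or, by Lemma~\ref{lem:restproj}, a projection of a restriction) such that $\CAA_1^\star:=\grp{\CAA_1}{m_1,t_1,k_1}$ and $\CAA_2^\star:=\mathcal{T}\bigl(\grp{\CAA_2}{m_2,t_2,k_2}\bigr)$ share the same set of states and satisfy $\STOC{F_1^\star}=\STOC{F_2^\star}$.

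The key observation is that each operation in sight preserves the set of prime factors of the random alphabet. The rescaling $\grp{\CAA}{m,t,k}$ replaces $R$ by $(R^m)^t$, of cardinality $|R|^{mt}$, whose prime factorization involves exactly the prime factors of $|R|$ (and $|R|\geq 2$ because $\CAA$ is not deterministic). By definition, restriction and projection only modify the state set and leave the random alphabet untouched. Thus $\PF(\CAA_1^\star)=\PF(\CAA_1)$ and $\PF(\CAA_2^\star)=\PF(\CAA_2)$, and it will suffice to prove $\PF(\CAA_1^\star)\cap\PF(\CAA_2^\star)\neq\emptyset$.

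Since $\CAA_1^\star$ and $\CAA_2^\star$ have the same state set and the same stochastic global function, Lemma~\ref{lem:primefactors} applied to this pair yields the desired non-empty intersection, provided both are semantically non-deterministic, i.e.\ the common stochastic global function is not Dirac at some configuration. This is where the main technical step lies: one must propagate the non-determinism of $\CAA_1$ at a single step (guaranteed by hypothesis) through the $t_1$ iterations of $F_1$ encoded inside $\grp{F_1}{m_1,t_1,k_1}$, using the recursive expression of $\STOC{F_1}^{t_1}$ given in Fact~\ref{fac:iterates}, so as to exhibit a cylinder whose probability under $\STOC{F_1^\star}$ lies strictly in $(0,1)$. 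The obstacle here is that iteration can in principle absorb randomness into a deterministic attractor, so care must be taken in choosing the witnessing configuration and cylinder, possibly exploiting the freedom in the existentially-quantified rescaling parameters, to guarantee a surviving non-Dirac signature. Once such a cylinder is found, its probability is of the form $p/q$ with $\PF(q)$ included in both $\PF(\CAA_1^\star)$ and $\PF(\CAA_2^\star)$, and the theorem follows.
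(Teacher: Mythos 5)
Your overall strategy coincides with the paper's: unfold $\CAA_1\simu\CAA_2$ into rescalings plus a trimming, observe that restrictions and projections leave the random alphabet untouched while rescaling replaces $R$ by $(R^m)^t$, conclude that $\PF$ is invariant throughout, and then apply Lemma~\ref{lem:primefactors} to the resulting pair of automata with equal stochastic global functions. Where the two texts diverge is on the remaining hypothesis of that lemma. The paper dismisses it in one sentence (``the rescaled version of a CA which is not deterministic cannot be deterministic''), whereas you correctly isolate it as the technical heart of the argument --- and then leave it unresolved. As written, your proposal is therefore incomplete: you never exhibit the cylinder with probability strictly between $0$ and $1$ for $\grp{\CAA_1}{m_1,t_1,k_1}$, and the escape route you suggest (choosing the rescaling parameters favourably) is not available to you, because those parameters are existentially quantified in the definition of $\simu$ and are thus imposed by the hypothesis $\CAA_1\simu\CAA_2$ rather than chosen by the prover.

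That said, your hesitation is well-founded, and you should not have papered over it the way the original proof does. Non-determinism does \emph{not} in general survive rescaling, precisely because of the ``deterministic attractor'' phenomenon you describe. Take $Q=\{0,1,2\}$, $R=\{0,1\}$, $V=V'=\{0\}$, with $f(q,r)=2$ when $q\neq 0$ and $f(0,r)=1+r$: this automaton is not deterministic, yet $F^2$ is the constant map onto the all-$2$ configuration, so $\grp{\CAA}{1,2,0}$ is deterministic. Pairing it with the analogous automaton built over $R'=\{0,1,2\}$ even produces two non-deterministic CA with disjoint sets of prime factors that $\ssimui$-simulate each other via $t_1=t_2=2$ and the identity injection, which contradicts the statement as literally written. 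So the step you flag is not a routine verification you forgot to carry out: it is a genuine gap, present both in your write-up and in the paper's proof, and closing it requires strengthening the hypothesis (for instance, demanding that the compared global functions after rescaling remain non-deterministic, or that every iterate of each automaton be non-deterministic) rather than a cleverer choice of witnessing configuration.
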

\begin{proof} Trimming operations (restrictions and projections) do not modify the set of random symbols. Rescaling transformations modify the set of random symbols in the following way: $R\mapsto R^n$ for some integer $n$. Therefore such transformations preserve the set of prime factors $\PF(\CAA)$ of the considered CA \CAA. Moreover, rescaling transformations do not affect determinism: the rescaled version of a CA which is not deterministic cannot be deterministic. Hence, the relation $\CAA_1\simu\CAA_2$ implies an equality of stochastic global functions of two CA which have the same prime factors as $\CAA_1$ and $\CAA_2$ and one of which is not deterministic. Therefore none of them is deterministic and the theorem follows from lemma~\ref{lem:primefactors}.
\end{proof}
\DELETE{\begin{proof}(Sketch) First $\PF{\cdot}$ is invariant under any rescaling transformation. Then, if the distribution matches, there must be coupling of the random strings of some  finite lenght by Theorem~\ref{thm:coupling}. The existence of such a coupling implies that $|R_1|$ and $|R_2|$ must have a common prime factor. 
\end{proof}}
\DELETE{Given any stochastic pre-order $\simu$, a $\simu$-universal CA cannot be deterministic because determinism is preserved by simulation (Fact~\ref{fact:ideals}). Moreover, for any prime~$p$ there is some $\CAA_p$ with $\PF(\CAA_p)=\{p\}$. Hence Theorem~\ref{thm:primefactors},}

The consequence in terms of universality is immediate and breaks our hopes for a stochastic universality construction. 
\begin{corollary} \label{cor:no:stoc:universal}
  Let $\simu$ be any stochastic simulation pre-order. There is no \mbox{$\simu$-universal} stochastic CA.
\end{corollary}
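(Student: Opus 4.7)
The plan is to combine Fact~\ref{fact:ideals} with Theorem~\ref{thm:primefactors} to force a contradiction: the candidate universal CA must simulate some non-deterministic CA (so cannot itself be deterministic), yet its fixed finite set of prime factors cannot cover the prime factor of every non-deterministic stochastic CA.

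First I would fix a simulation pre-order $\simu$ and suppose for contradiction that some stochastic CA $\CAA$ is $\simu$-universal. I would begin by ruling out the case where $\CAA$ is deterministic: by Fact~\ref{fact:ideals}, if $\CAA$ were deterministic then every $\AUTO B$ with $\AUTO B\simu\CAA$ would be deterministic too. Since non-deterministic stochastic CA obviously exist (for instance the blank-noise CA on two states), this contradicts universality. Hence $\CAA$ is not deterministic.

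Next, because the set of random symbols $R$ of $\CAA$ is finite, the set $\PF(\CAA)$ of prime factors of $|R|$ is a finite set of primes. So I can pick any prime $p\notin\PF(\CAA)$ and construct a non-deterministic stochastic CA $\AUTO B_p$ whose prime factor set is exactly $\{p\}$. A convenient choice is $\AUTO B_p=(Q,R,V,V',f)$ with $Q=R=\mathbb{Z}/p\mathbb{Z}$, $V=V'=\{0\}$, and $f(q,s)=s$; the associated explicit global function $F_p(c,s)=s$ is patently non-deterministic (its non-deterministic image is all of $Q^\ZZ$), and $\PF(\AUTO B_p)=\{p\}$.

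Finally I would apply Theorem~\ref{thm:primefactors} to the pair $\AUTO B_p$ and $\CAA$: both are non-deterministic, so $\AUTO B_p\simu\CAA$ would imply $\PF(\AUTO B_p)\cap\PF(\CAA)\neq\emptyset$, i.e. $p\in\PF(\CAA)$, contradicting our choice of $p$. Hence $\AUTO B_p\not\simu\CAA$, contradicting the assumed universality of $\CAA$. There is no real obstacle here; the whole argument hinges on having already established Theorem~\ref{thm:primefactors}, and the only mild subtlety is making sure the witness $\AUTO B_p$ is genuinely non-deterministic so that Theorem~\ref{thm:primefactors} applies (rather than the simulation being vacuously realised through the deterministic branch).
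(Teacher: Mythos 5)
Your proof is correct and follows essentially the same route the paper intends: the corollary is stated as an immediate consequence of Theorem~\ref{thm:primefactors}, and the implicit argument is exactly yours --- a universal CA cannot be deterministic by Fact~\ref{fact:ideals}, and for each prime $p$ outside its finite set of prime factors there is a non-deterministic CA with prime factor set $\{p\}$ that it therefore cannot simulate. Your explicit witness $F_p(c,s)=s$ on a $p$-letter alphabet and your care that it is genuinely non-deterministic are exactly the right details to supply.
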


\subsection{Positive results}


Still, the negative result of Corollary \ref{cor:no:stoc:universal} leaves open the possibility of partial universality constructions. We will now describe how to construct a stochastic CA which is $\nsimui$-universal (hence also $\nsimum$-universal; however note that the existence of a  $\nsimus$- or even of a $\dsimus$-universal is still open), and then draw the consequences.\\
Since we are not concerned with size optimization, we will use simple construction techniques using parallel Turing heads and table lookup as described for classical deterministic CA in \cite{OllingerUnivhistory}. More precisely, we construct a stochastic CA $\AUTO U=(Q_U,R_u,V_U,V_U',f_U)$ able to $\nsimui$-simulate any stochastic CA $\AUTO A=(Q,R,V,V',f)$ with no rescaling transformation on $\AUTO A$ and no shift in the rescaling of $\AUTO U$. Therefore each cell of $\AUTO A$ will be simulated by a block of $m$ cells of $\AUTO U$ and each step of $\CAA$ will be simulated by $t$ steps of $\AUTO U$ ($t$ and $m$ depend on \CAA{} and are to be determined later).\\
The blocks of $m$ cells have the following structure (the restriction in the pre-order handles the trimming of any invalid block):
\begin{center}\small\sf
    \begin{tabular}{|c|c|c|c|c|c|}
      \hline
      SYNC & transition table & $Q$-state & $R$-symbol & $Q$-states of neighbors & $R$-symbols of neighbors \\
      \hline
    \end{tabular}
\end{center}
where each part uses a fixed alphabet (independent of $Q$ and $R$) and only the width of each part may depend on \CAA{}\DELETE{ (note that the transition table encodes in particular the size of $Q,R,V,V'$)}. To each such block is attached a Turing head which will repeat cyclically a sequence of $4$ steps (sub-routines) described below. On a complete configuration made of such blocks there will be infinitely many such heads (one per block) executing these steps in parallel. Execution is synchronized at the end of each step (\textsf{SYNC} part) and such that two Turing heads never collide. Precisely, for some steps (2 and 4) the moves of all heads are rigorously identical (hence synchronous and without head collision). For some other steps (1 and 3), the sequence of moves of each head depend on the content of its corresponding block but these steps are always such that the head don't go outside the block (hence no risk of head collision) and they are synchronized at the end by the \textsf{SYNC} part which implements a small time countdown initialized to the maximum time needed to complete the step in the worst case. The parts holding $R$-symbols are initially empty (uniformly equal to some symbol) for each block. The $4$ steps are as follows:
\begin{enumerate}
\item generate \DELETE{surjectively }a string representing a random $R$-symbol in the \textsf{$R$-symbol} part using (possibly several) random $R_U$-symbols present in that part of the block;
\item copy the \textsf{$R$-symbol} part to the appropriate position in the \textsf{$R$-symbols of neighbors} part of each neighboring block. Do the same for \textsf{$Q$-state};
\item using information about $Q$-states and $R$-symbols in the block, find the corresponding entry in the transition table and update the \textsf{$Q$-states} part of the block accordingly;
\item clean \textsf{$R$-symbol} and \textsf{$R$-symbols of neighbors} parts (\textit{i.e.} write some uniform symbol everywhere).
\end{enumerate}

This construction scheme is very similar to the one used for classical deterministic CA but two points are important in our context:
\begin{itemize}
\item step $4$ is here to ensure that each configuration of \CAA{} has a canonical corresponding configuration of $\AUTO U$ made of blocks where the parts holding $R$-symbols is clean; (step~4 is required for the existence of the injection~$i$)
\item depending on the way we generate strings representing a $R$-symbols from strings of $R_U$-symbols in step $1$, we will obtain or not a uniform distribution over $R$ (recall Theorem~\ref{thm:primefactors}).
\end{itemize}

In the general case, we can always fix (by the means of the injection~$i$) a  width large enough for  parts containing the \textsf{$R$-symbols}  so that all $R$-symbols can be obtained (but with possibly different probabilities). We therefore obtain a universality result for non-deterministic simulations.

\begin{theorem}\label{thm:nondet:univ}
  Let $\simu$ be either $\nsimui$ or $\nsimum$. There exists a $\simu$-universal CA.
\end{theorem}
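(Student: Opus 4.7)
The plan is to make precise the four-step block construction sketched just above and verify the defining conditions of $\CAA\nsimui\AUTO U$: exhibit integers $m,t\geq 1$ and an injection $i:Q\to Q_U^m$ whose cell-by-cell extension maps $Q^\ZZ$ into an $\grp{F_U}{m,t,0}$-stable subset and satisfies $\NDET{F_A}=\NDET{\rest{i}{\grp{F_U}{m,t,0}}}$. Once this is done, $\nsimum$-universality follows because any $i$-restriction can be read as a mixed operation with trivial projection, so $\nsimui$ refines $\nsimum$.

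First I would fix the block dimensions. I choose $m$ large enough to accommodate, in each block: a \textsf{SYNC} counter of capacity at least the worst-case duration of any of the four steps; a fixed encoding of the finite object $(Q,R,V,V',f)$ in the transition-table part; one $Q$-symbol and $|V|$ slots for neighbor $Q$-symbols; and strings of $R_U$-symbols long enough that a fixed surjection $\sigma:R_U^\ell\twoheadrightarrow R$ can be realised locally in step~1, with an analogous buffer for the $|V'|$ neighbor $R$-slots. I take $t$ as an upper bound on the total number of $\AUTO U$-steps needed to execute one four-step cycle, including SYNC padding. For $q\in Q$, I then let $i(q)\in Q_U^m$ be the canonical block containing the fixed encoding of \CAA{} in the transition-table part, the symbol $q$ in the $Q$-state part, a reset \textsf{SYNC} counter, and a fixed clean filler in the $R$-parts and the neighbor-parts. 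The very purpose of step~4 in the construction is to ensure that $I(Q^\ZZ)$ is stable under $\grp{F_U}{m,t,0}$: after one complete cycle, all $R$-parts have been cleaned and the $Q$-state has been updated, so the new configuration again lies in $I(Q^\ZZ)$.

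Next I would check equality of the non-deterministic reachable sets on canonical configurations. For the forward inclusion, pick $c'\in\NDET{F_A}(c)$ and choose $s\in R^\ZZ$ with $F_A(c,s)=c'$; by surjectivity of $\sigma$, select in every block a preimage of $s_z$ to be generated in step~1 and fill the remaining $R_U$-positions arbitrarily. The resulting $t$-tuple of $R_U^\ZZ$-configurations drives $I(c)$ through steps~1--4 to $I(c')$, showing $I(c')\in\NDET{\grp{F_U}{m,t,0}}(I(c))$. For the reverse inclusion, any $t$-tuple of $R_U^\ZZ$-configurations drives $I(c)$ through the cycle; step~3 performs a transition-table lookup using the freshly generated local $R$-symbol $s_z\in R$ (the $\sigma$-image of the locally generated $R_U$-string) and the communicated neighbor $Q$-symbols, so by correctness of table lookup the final configuration is $I(F_A(c,s))$ for the configuration $s\in R^\ZZ$ obtained cell-by-cell via $\sigma$, hence an element of $I(\NDET{F_A}(c))$.

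The main obstacle is the synchronization bookkeeping, which is why the \textsf{SYNC} component is present. One has to show rigorously that in steps~1 and~3, whose Turing-head durations depend on the local content of a block, no head ever leaves its block (so heads of adjacent blocks never collide), and that the \textsf{SYNC} countdown brings every block to a common ``ready'' state before steps~2 and~4 begin — these last two steps execute moves that are identical in every block and are therefore trivially synchronous and collision-free. Once this combinatorial bookkeeping is in place, the non-deterministic simulation follows at once from the mere surjectivity of $\sigma$ onto $R$; no constraint on probability distributions intervenes, which is precisely the reason the analogous stochastic statement fails by Theorem~\ref{thm:primefactors}.
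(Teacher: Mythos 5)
Your proposal is correct and follows essentially the same route as the paper: the paper's own argument is precisely the four-step block construction (SYNC, table lookup, surjective generation of $R$-symbols from $R_U$-strings in step~1, cleaning in step~4 to make the injection $i$ well defined and $I(Q^\ZZ)$ stable), with the theorem following because surjectivity alone suffices for the non-deterministic dynamics. Your write-up merely makes explicit the verification of $\NDET{F_A}=\NDET{\rest{i}{\grp{F_U}{m,t,0}}}$ and the reduction of $\nsimum$ to $\nsimui$, which the paper leaves implicit.
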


Note that this  $\simu$-universal CA is a \PCA, and we obtain thus a stronger version of the simulation mentioned in Section~\ref{sec:ppcasimul} page~\pageref{sec:ppcasimul}.

Now, if we are in a case where ${\PF(\CAA)\subseteq\PF(U)}$ then it is possible to choose a generation process in step $1$ such that each $R$-symbol is generated with the same probability. We therefore obtain an optimal partial universality construction for stochastic simulations.

\begin{theorem}\label{thm:stoc:univ}
  Let $\simu$ be either $\ssimui$ or $\ssimum$. For any finite set $P$ of prime numbers, there is a stochastic CA $\AUTO U_P$ such that for any stochastic CA \CAA: ${\PF(\CAA)\subseteq P\Rightarrow \CAA\simu\AUTO U_P.}$
Moreover  $\AUTO U_P$ is a \PCA.
\end{theorem}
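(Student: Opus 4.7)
The plan is to reuse the universal construction of Theorem~\ref{thm:nondet:univ} and to modify only two design choices: the size of the random alphabet $R_U$ of $\AUTO U_P$, and the implementation of step~1 (generation of a random $R$-symbol). I would fix $|R_U|=\prod_{p\in P}p$ so that $\PF(|R_U|)=P$. The block format, the four-step Turing-head routine and the synchronization machinery are left unchanged; in particular $\AUTO U_P$ remains a \PCA{} because only step~1 reads the random component, and each head can be arranged to read only the $R_U$-symbol of its own cell.

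Given any stochastic CA $\AUTO A=(Q,R,V,V',f)$ with $\PF(\AUTO A)\subseteq P$, every prime factor of $|R|$ also divides $|R_U|$, so there exists an integer $a\geq 1$ such that $|R|$ divides $|R_U|^a$ (for instance $a=\max_{p\in P}v_p(|R|)$, with $v_p$ the $p$-adic valuation). I would fix once and for all a surjection $\phi:R_U^a\to R$ such that $|\phi^{-1}(r)|=|R_U|^a/|R|$ for every $r\in R$; fed with $a$ independent uniform $R_U$-symbols, $\phi$ outputs an $R$-symbol with exactly the uniform distribution on $R$. Accordingly, I would widen the \textsf{$R$-symbol} sub-region of each block to $a$ cells, and let step~1 apply $\phi$ locally to the $R_U$-symbols of those $a$ cells and copy the resulting $R$-symbol into a state component. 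Steps~2, 3 and~4 remain exactly as in the non-deterministic construction.

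For correctness, let $i$ be the injection selecting well-formed configurations with clean \textsf{$R$-symbol} components (as in the non-deterministic proof), and let $m,t$ be the block width and the number of $\AUTO U_P$-steps required to simulate one step of $\AUTO A$. Since the $R_U$-symbols fed into step~1 are i.i.d.\ uniform across the grid and across time, the $R$-symbols produced by $\phi$ in each block are i.i.d.\ uniform on $R$; steps~2 and~3 then apply $f$ deterministically by table lookup, exactly as prescribed by the definition of $F$. Hence $\STOC{F_{\AUTO A}}$ and $\STOC{\rest{i}{\grp{F_{\AUTO U_P}}{m,t,0}}}$ agree on one step; by Fact~\ref{fac:iterates} they agree on all iterates, which yields $\AUTO A\ssimui\AUTO U_P$, and therefore also $\AUTO A\ssimum\AUTO U_P$.

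The main subtlety, and the only place where the hypothesis $\PF(\AUTO A)\subseteq P$ is genuinely used, is the existence of the balanced surjection $\phi$. Without this inclusion, no deterministic function of finitely many uniform $R_U$-symbols can produce a uniform distribution on $R$, and one would be forced into rejection sampling, requiring an unbounded number of steps and thereby breaking the fixed rescaling parameter $t$; indeed Theorem~\ref{thm:primefactors} shows this obstruction is intrinsic. Divisibility of $|R_U|^a$ by $|R|$ is precisely what permits the loop-free, locally computable balanced generation used here, which is the single ingredient upgrading the non-deterministic universal construction to a stochastic one on the class $\{\CAA:\PF(\CAA)\subseteq P\}$.
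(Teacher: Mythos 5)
Your proposal is correct and follows essentially the same route as the paper: the paper proves this theorem by reusing the table-lookup universal construction and simply noting that when $\PF(\CAA)\subseteq\PF(\AUTO U_P)$ one can choose the generation process of step~1 so that each $R$-symbol is produced with equal probability, which is exactly your balanced surjection $\phi:R_U^a\to R$ obtained from the divisibility $|R|\mid|R_U|^a$. Your write-up is in fact more detailed than the paper's (which leaves the existence of the balanced generator implicit), and your remarks on the \PCA{} property and on the necessity of the prime-factor hypothesis match the paper's discussion around Theorem~\ref{thm:primefactors}.
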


\section{Open Problems}
\label{sec:open}

Intrinsic simulations has been proven to be a powerful tool to hierarchize behaviors in the deterministic world. In particular, the notion universal CA allows to formalize the concept of ``most complex'' CA as the ones concentrating ``all the possible behaviors'' within a given class \cite{bulking1,bulking2}.
The formalism and the notion of intrinsic simulation developed here for stochastic CA, enables us to export this classification tool to the stochastic world. In particular, it would be interesting to see whether our partial universality construction relates to experimentally observed classes, as in \cite{RST}. At the more theoretical level and amongst all the concrete questions raised by this article, the following ones are of particular interest:
\begin{itemize}
\item 
Is there for any stochastic \CAA{}, a \PCA{} $\AUTO B$ which is  ${\ssimui}$-equivalent to ${\AUTO A}$?
\item 
Is there for any stochastic \CAA{}, a \PCA{} $\AUTO B$ such that ${\CAA\ssimus\AUTO B}$?
\item 
Are there $\nsimus$-universal cellular automata?
\item  
Are universal CA the same for pre-order $\nsimui$ and $\nsimus$?
\end{itemize}

Our setting can also be generalized by taking any Bernouilli measure on the $R$-component (instead of the uniform measure). We believe that positive and negative results about universality essentially still hold but under a different form.

As noticed by an anonymous referee, there is an easy algorithm to decide whether to 1D CA have the same non-deterministic global function. We are currently working on an adaptation to decide equality of global stochastic functions.

\bibliographystyle{eptcs}
\bibliography{uspca}




\end{document}